\date{}
\begin{document}
\title{Eclipse Attack Detection on a Blockchain Network as a Non-Parametric Change Detection Problem}
\author{Anurag~Gupta\thanks{Anurag Gupta is with the School of Electrical \& Computer Engineering, Cornell University, Ithaca NY, 14853, USA.  (e-mail: anuragg.in@gmail.com).}  and
  Vikram Krishnamurthy\thanks{Vikram Krishnamurthy  is with the School of Electrical \& Computer Engineering, Cornell University, Ithaca NY, 14853, USA.  (e-mail: vikramk@ece.cornell.edu).}
  and Brian~Sadler\thanks{Brian Sadler is with DEVCOM Army Research Laboratory, Adelphi, Maryland, U.S. (e-mail: Brian.sadler@ieee.org)}}
\maketitle
\begin{abstract}
    This paper introduces a novel non-parametric change detection algorithm to identify eclipse attacks on a blockchain network; the non-parametric algorithm relies only on the empirical mean and variance of the dataset, making it highly adaptable. An eclipse attack occurs when malicious actors isolate blockchain users, disrupting their ability to reach consensus with the broader network, thereby distorting their local copy of the ledger. To detect an eclipse attack, we monitor changes in the Fréchet mean and variance of the evolving blockchain communication network connecting blockchain users. First, we leverage the Johnson-Lindenstrauss lemma to project large-dimensional networks into a lower-dimensional space, preserving essential statistical properties. Subsequently, we employ a non-parametric change detection procedure, leading to a test statistic that converges weakly to a Brownian bridge process in the absence of an eclipse attack. This enables us to quantify the false alarm rate of the detector. Our detector can be implemented as a smart contract on the blockchain, offering a tamper-proof and reliable solution. Finally, we use numerical examples to compare the proposed eclipse attack detector with a detector based on the random forest model. 
\end{abstract}


\IEEEpeerreviewmaketitle
\section{Introduction}
Blockchain, an immutable ledger distributed across multiple users~\cite{2017:MN-et-al}, relies on consensus among its users to share data. This paper  studies adversarial attacks on blockchains, with a specific focus on eclipse attacks~\cite{2015:EH-et-al}.  In an eclipse attack, malicious users isolate a victim user, disrupting their ability to reach a consensus with the rest of the network. 
 For example, if a user has eight incoming connections from other users, and an attacker controls all eight of those nodes, the attacker can refuse to relay any new blocks that rest of the network produce.
 Hence, detecting eclipse attacks are crucial for safeguarding blockchain networks. 
\subsubsection*{Main Results and Organization} 
To detect an eclipse attack, we propose a non-parametric change detection algorithm  that identifies  changes in the Fréchet mean and variance~\cite{2020:PD-HM} (these are  topological generalizations of mathematical expectation and variance) within a sequence of randomly evolving blockchain communication networks (BCNs). 
We exploit  the Johnson-Lindenstrauss (JL) lemma~\cite{2008:JM} to extract essential features from the large-dimensional BCN, ensuring that the test statistic is approximately preserved. In blockchain, a smart contract is a computer program that automatically executes a task 
based on a pre-specified conditions. Our proposed detector can be implemented as a smart contract on blockchain to detect an eclipse attack using a network monitor; this information can then be relayed to the blockchain users.

 Sec.\ref{sec:model}  formulates eclipse attack detection as a change detection problem on a space of directed graph and 
 describes  our proposed detector.
In Sec.\ref{sec:analysis-algorithm},  we analyze the performance of the detector using weak convergence methods. Specifically, Theorem~\ref{thm:null-hypothesis-results} shows that the scaled detector statistic converges weakly to a Brownian bridge process. As a result we can explicitly determine the false alarm by calculating the quantiles of a Brownian bridge. In the presence of an eclipse attack, Theorem~\ref{thm:alternate-hypothesis} estimates the onset of the eclipse attack. Finally, Theorem~\ref{thm:jl-lemma-test-statistic} shows the effect of 
 the JL lemma on the false positive alarm rate of the detector.

 Sec.\ref{sec:simulations}  assesses the performance of our eclipse attack detector using numerical examples on simulated datasets. 
We also provide numerical examples comparing the proposed eclipse attack detector and a detector based on the random forest model (RFM).


	\subsection*{Related Works}
	In the literature, several detectors have been proposed for detecting eclipse attacks. \cite{2023:DB-RB} and \cite{2020:GX-et-al} utilize random forest classification to analyze communication traffic and train their models on eclipse attack datasets. \cite{2022:QD-et-al} employs deep learning technique for detecting eclipse attacks. \cite{2020:BA-et-al} uses the blockchain's block creation rate as a detection metric. \cite{2021:HZ-et-al} monitors change in the proof-of-work difficulty levels to identify eclipse attacks. 
    
    Related to attack mitigation, a peer selection strategy introduced by \cite{2022:AY-et-al} offers a way to reduce the likelihood of eclipse attacks. Eclipse attacks share similarities with Sybil attacks~\cite{2021:MI-RM} and routing attacks~\cite{2022:RC-et-al}, both of which can impact the integrity of the blockchain consensus protocol.
    
    Our eclipse attack detection approach distinguishes itself by not requiring training data. Instead, we employ statistical tools from \cite{2020:PD-HM}, which offer a generalized solution for change detection in arbitrary object spaces. We identify eclipse attacks by tracking changes in the Fréchet mean and variance within the sequence of randomly evolving BCN. 
	
	\section{Detecting Eclipse Attack on a Blockchain Network}
 \label{sec:model}
	In this section, we formulate  detecting eclipse attacks on a blockchain network as a change detection problem and present our detection algorithm. 
    \subsection{Model for Eclipse Attack}
    
        
        
    We begin by modeling the BCN as a directed graph and defining its adjacency matrix.
	\begin{definition}
		\label{def:communication-network}
		A BCN is represented as a directed graph $\graph=(\vertices,\edges)\in\graphSpace$, where $\graphSpace$ is the graph space comprising $\nUsers$ vertices. Each vertex has $\nNbr$ outgoing edges.
	\end{definition}
	The adjacency matrix $\adjacencyMatrix_{\graph}\in\R^{|\vertices|\times|\vertices|}$ of the BCN $\graph$ is defined as follows:
	\begin{align}
	\label{eq:adjacency-matrix}
	\adjacencyMatrix_{\graph}(i,j)=\begin{cases}
	1, & \text{$\exists$ an edge from the vertex $j$}\\
	&\text{to the vertex $i$} \\
	0, & \text{otherwise}
	\end{cases}
	\end{align}
	
        Consensus in blockchain relies on peer-to-peer (P2P) communication. The BCN serves to illustrate the flow of information among blockchain users.
 In the absence of an eclipse attack, the BCN at each time $t$ resembles a random graph with uniform distribution. Here, each user simply selects $\nNbr$ neighbors in a random and uniform manner to share information. However, during an eclipse attack, malicious users target victim users with substantial computational power. These malicious actors choose their neighbors in a non-uniform manner to disrupt the consensus of the victim users. 
    
    In this work, we assume: (1) When all users select their neighbour honestly using the blockchain communication protocol, the random BCN $\graph$ follows an unknown but fixed distribution   $\distributionEclipseAttackAbsent$. (2) The eclipse attack strategy is time-invariant\footnote{For an eclipse attack, multiple malicious users must communicate continuously with the victim user(s). A complex eclipse attack strategy can slow the communication rate and make the attack ineffective. Therefore, the assumption of a time-invariant eclipse attack strategy is justified.}. Consequently, in the presence of an eclipse attack, the BCN $\graph$ follows an unknown but fixed distribution $\distributionEclipseAttackPresent$. For instance, a common eclipse attack strategy employed by malicious users is to choose the victim users as their neighbors with a significantly higher probability compared to other users. Now, let's provide a formal definition of our model for the eclipse attack.
	
	\begin{definition}[Eclipse attack]
		\label{def:eclipse-attack}
		A blockchain is free from an eclipse attack if the random BCN, as represented by the graph $\graph$ (Definition~\ref{def:communication-network}), is sampled from $\graphSpace$ following the distribution $\distributionEclipseAttackAbsent$. Conversely, a blockchain is under an eclipse attack if the random graph $\graph$ is sampled from $\graphSpace$ following the distribution $\distributionEclipseAttackPresent$.
	\end{definition}
	\textit{Example.} Consider a blockchain network with $\nUsers$ blockchain users, each of whom selects $\nNbr$ neighbors for consensus. An example of the distribution $\distributionEclipseAttackAbsent$ is when each of the neighbors in the BCN is selected uniformly at random, i.e.,
	\begin{align*}
	\Prob{\adjacencyMatrix_{\graph}(i,j)=1}=\frac{\nNbr}{\nUsers}\quad\text{s.t. }\sum_{i}\adjacencyMatrix_{\graph}(i,j)=\nNbr, \forall j
	\end{align*}
	where, $\nNbr$ is defined in Definition~\ref{def:communication-network}.
	
	Now, let's consider an example of the distribution $\distributionEclipseAttackPresent$. Here, $\nMal$ malicious blockchain users, denoted as $\maliciousUsers$, choose $v_1\in\vertices$ as their victim. 
    
    For $j=\maliciousUsers$
	\begin{align*}
	\Prob{\adjacencyMatrix_{\graph}(i,j)=1}&>\frac{\nNbr}{\nUsers},\quad i=v_1\\
	\Prob{\adjacencyMatrix_{\graph}(i,j)=1}&<\frac{\nNbr}{\nUsers},\quad \text{otherwise}
    \end{align*}
    For $j\neq \maliciousUsers$
    \begin{align*} \Prob{\adjacencyMatrix_{\graph}(i,j)=1}&=\frac{\nNbr}{\nUsers}\quad\text{s.t. }\sum_{i}\adjacencyMatrix_{\graph}(i,j)=\nNbr, \forall j
	\end{align*}
    
    In this example, the victim user heavily depends on information provided by attackers to keep up with the current state of the blockchain. As a result, the victim's local copy of the distributed ledger no longer aligns with the majority consensus of the blockchain network.

    \subsection{Eclipse Attack Detection Problem}
	We now formulate the eclipse attack detection problem as a change detection problem. The proposed  detector operates on an offline dataset of BCNs\footnote{Note that changes in the BCN occurs at a faster rate than the addition of new blocks to the blockchain. This allows us to observe a substantial sample of BCNs before a double spend attack resulting from an eclipse attack is achieved. Therefore, using offline datasets for eclipse attack detection is practical.}; the BCN can be monitored using a network monitor. We formulate the eclipse attack detection problem as a hypothesis testing problem.
	\begin{definition}[Eclipse attack detection problem]
		\label{def:hypothesis-testing}
		Let the sequence of random graphs $\{\dataset\}$ denote the sequence of BCNs observed. The eclipse attack detection problem on a blockchain network is the following hypothesis testing problem
		\begin{align}
		\label{eq:hypothesis}
		\begin{aligned}
		H_0&: \dataRV_1, \dataRV_2, \ldots, \dataRV_\numData \sim \distributionEclipseAttackAbsent\\
		H_1&: \exists\; \changePointExact\in\{1,\ldots,\numData\} \\&s.t.\: \left\{\begin{array}{l} \dataRV_1, \dataRV_2, \ldots, \dataRV_{\changePointExact-1} \sim \distributionEclipseAttackAbsent \\ \dataRV_{\changePointExact}, \dataRV_{\changePointExact+1}, \ldots, \dataRV_\numData \sim \distributionEclipseAttackPresent\end{array}\right.
		\end{aligned}
		\end{align}
	\end{definition}
	Here, $\changePointExact$ denotes the the onset of the eclipse attack on the blockchain network. The eclipse attack detection problem~\eqref{eq:hypothesis} is a change detection problem on a space of directed graphs.

    \subsection{Test Statistic for Detecting Eclipse Attack}
    \label{sec:test-statistic}
	In this section, we present a test statistic to solve the eclipse attack detection problem~\eqref{eq:hypothesis}. The proposed test statistic estimates changes in the mean and variance of the sequence of BCNs. However, the communication network do not lie in the Euclidean space. So, we use the concept of Fréchet mean and variance~\cite{2020:PD-HM}, a topological generalization of mean and variance\footnote{Fréchet mean $\mu$ and Fréchet variance $V$ of a probability measure $P$ is defined as follow:
		\begin{align*}		
		\mu=\arg\min_{\omega \in \graphSpace} \Expect{\metric^2(\dataRV, \omega)},\quad
		V=\min _{\omega \in \graphSpace} \Expect{\metric^2(\dataRV, \omega)}
		\end{align*}
		Here, $\dataRV\sim P$ denotes a random object with probability measure $P$; $\graphSpace$ denotes the sample space of the random object $\graph$; and $\metric$ denotes a suitable choice of distance metric on the space $\graphSpace$.}.
    To calculate the Fréchet mean and variance, we define a distance metric $\metric$ on the space of the BCN $\graphSpace$ (Definition \ref{def:communication-network}). This metric measures the dissimilarity between two BCNs $\graph_1$ and $\graph_2$ using the Frobenius norm.
	\begin{definition}
		\label{def:distance-communication-network}
		The distance $\metric$ between two BCNs $\graph_1,\graph_2\in \graphSpace$, (Definition~\ref{def:communication-network}), is defined as the Frobenius norm\footnote{Our model assumes that the number of users in the blockchain is fixed. Hence, distance between two BCNs is well-defined.} of the difference between their adjacency matrices~\eqref{eq:adjacency-matrix}.
		\begin{align}
		\label{eq:distance-metric}
		\metric(\graph_1,\graph_2)&=\left(\sum_{i,j}|\adjacencyMatrix_{\graph_1}(i,j)-\adjacencyMatrix_{\graph_2}(i,j)|^2\right)^{\frac{1}{2}}
		\end{align}
	\end{definition}
	
 The test statistic partitions the sequence of BCNs into two parts. The goal is to determine if the BCNs in these two components are sampled from the same or distinct distributions. To achieve this, the test statistic examines the Fréchet mean and variance of the BCNs in each component.

Under the null hypothesis $H_0$~\eqref{eq:hypothesis}, i.e., absence of an eclipse attack on the blockchain network, the Fréchet mean and variance of the BCNs in both parts are the same. Conversely, under the alternate hypothesis $H_1$~\eqref{eq:hypothesis}, the Fréchet mean and variance of the BCNs in these parts differ, signaling the presence of an eclipse attack.

Before introducing our test statistic for eclipse attack detection, we define several mathematical quantities that rely on the adjacency matrices $\adjacencyMatrixDataset$ of the sequence of BCNs. We also introduce the term $\changePointGuess$, which represents an estimate for the change point $\changePointExact$ in the eclipse attack detection problem~\eqref{eq:hypothesis}. For each $\changePointGuess\in\{1,\ldots, N-1\}$, we proceed to define these quantities and present our test statistic.
    \begin{align}
    \label{eq:test-statistic-predefinition}
	\nonumber\hat{\mu}_{\changePointGuess}&:=\arg\min_{\omega \in \graphSpace} \frac{1}{\changePointGuess} \sum_{i=1}^{\changePointGuess} \metric^2\left(\graph_{i}, \graph_{\omega}\right) \\
    \nonumber\hat{V}_{\changePointGuess}&:=\frac{1}{\changePointGuess} \sum_{i=1}^{\changePointGuess} \metric^2\left(\graph_{i}, \hat{\mu}_{\changePointGuess}\right)\\
    \nonumber\hat{\mu}_{\numData-\changePointGuess}&:=\arg\min_{\omega \in \graphSpace} \frac{1}{(\numData-\changePointGuess)} \sum_{i=\changePointGuess+1}^\numData \metric^2\left(\graph_{i}, \graph_{\omega}\right)\\
    \nonumber\hat{V}_{\numData-\changePointGuess}&:=\frac{1}{(\numData-\changePointGuess)} \sum_{i=\changePointGuess+1}^\numData \metric^2\left(\graph_{i}, \hat{\mu}_{\numData-\changePointGuess}\right)\\
    \nonumber\hat{V}_{\changePointGuess}^C&:=\frac{1}{\changePointGuess} \sum_{i=1}^{\changePointGuess} \metric^2\left(\graph_{i}, \hat{\mu}_{\numData-\changePointGuess}\right)\\
    \nonumber\hat{V}_{\numData-\changePointGuess}^C&:=\frac{1}{\numData-\changePointGuess} \sum_{i=\changePointGuess+1}^\numData \metric^2\left(\graph_{i}, \hat{\mu}_{\changePointGuess}\right)\\
    \nonumber\hat{\mu}&:=\arg\min_{\omega \in \graphSpace} \frac{1}{\numData} \sum_{i=1}^\numData \metric^2\left(\graph_{i}, \graph_{\omega}\right)\\
    \nonumber\hat{V}&:=\frac{1}{\numData} \sum_{i=1}^\numData \metric^2\left(\graph_{i}, \hat{\mu}\right)\\
    \hat{\sigma}^2&:=\frac{1}{\numData}\left[ \sum_{i=1}^\numData \metric^4\left(\graph_{i}, \hat{\mu}\right)-\hat{V}^2\right]
    \end{align}
    The test statistic compares the Fréchet mean and variance of the BCNs $\graph_1,\graph_2,\ldots,\graph_\changePointGuess$ and $\graph_{\changePointGuess+1},\graph_{\changePointGuess+2},\ldots\graph_{\numData}$. 
	\begin{definition}[Test statistic for detecting an eclipse attack]
		\label{def:test-statistic}
		Let $\changePointGuess$ denote the estimate for the change point $\tau$ in the eclipse attack detection problem~\eqref{eq:hypothesis}. The test statistic $\testStat_{\changePointGuess,\numData}$ is defined as follows:
		\begin{align}
		\label{eq:statistic-definition}
		\begin{aligned}
		\testStat_{\changePointGuess,\numData}&=\frac{\changePointGuess(\numData-\changePointGuess)}{\numData^2\hat{\sigma}^2}\left\{\left(\hat{V}_{\changePointGuess}-\hat{V}_{\numData-\changePointGuess}\right)^2+\right.\\
		&\left.\left(\hat{V}_{\changePointGuess}^C-\hat{V}_{\changePointGuess}+\hat{V}_{\numData-\changePointGuess}^C-\hat{V}_{\numData-\changePointGuess}\right)^2\right\}
		\end{aligned}
		\end{align}		
		Here, 
  $\hat{V}_{\changePointGuess},\hat{V}_{\numData-\changePointGuess}^C\hat{V}_{\changePointGuess}^C, \hat{V}_{\numData-\changePointGuess}^C,\hat{\sigma}^2$ are defined in~\eqref{eq:test-statistic-predefinition}.
	\end{definition}
	The test statistic $\testStat_{\changePointGuess,\numData}$ in~\eqref{eq:statistic-definition} comprises two terms: the first term estimates the change in Fréchet variance, while the second term estimates the change in the Fréchet mean of the BCNs in the two components.

    \subsection{Dimensionality Reduction of the Adjacency Matrices}
	In this section, we use the JL lemma to to reduce the dimension of the adjacency matrix. Remember that the proposed test statistic~\eqref{def:test-statistic} is computed using the sequence of adjacency matrices for the BCNs. The number of elements in the adjacency matrix grows as the square of the number of blockchain users. Hence, it is necessary to reduce the dimension of the adjacency matrices $\adjacencyMatrix_\graph$ to decrease the computational cost of the test statistic $\testStat_{\changePointGuess,\numData}$~\eqref{eq:statistic-definition}. 
 In this work, we leverage the JL lemma to project the adjacency matrices of BCNs into a lower-dimensional subspace while approximately preserving the test statistic.
	\begin{lemma}[Johnson-Lindenstrauss (JL) lemma]
		\label{lemma:jl}
		Given any $\epsilon \in (0, 1)$ and an integer $\numData$, let $\dimReducedFeature$ be a positive integer satisfying
		$\dimReducedFeature \geq \frac{24}{3 \epsilon^2 - 2 \epsilon^3} \log \numData$.
		For any set $A$ containing $\numData$ points in $\R^\dimFeature$, there exists a mapping $f: \R^\dimFeature \rightarrow \R^\dimReducedFeature$ such that for all $x, y \in A$, the following inequality holds:
		$(1-\epsilon)\|x-y\|^2 \leq \|f(x)-f(y)\|^2 \leq (1+\epsilon)\|x-y\|^2$
	\end{lemma}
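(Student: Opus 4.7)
The plan is to establish the lemma by the probabilistic method: I exhibit a \emph{random} linear map $f:\R^{\dimFeature}\to\R^{\dimReducedFeature}$, derive a tail bound showing it approximately preserves the norm of any one fixed vector, and then union-bound over the $\binom{\numData}{2}$ difference vectors $x-y$ for $x,y\in A$. Choosing $\dimReducedFeature$ as in the hypothesis forces the total failure probability below one, so at least one deterministic realization of $f$ meets the required inequalities for every pair.

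First I would take $f(x)=\frac{1}{\sqrt{\dimReducedFeature}}\linearTransformation x$, where $\linearTransformation\in\R^{\dimReducedFeature\times\dimFeature}$ has i.i.d.\ standard Gaussian entries. By rotational invariance of the multivariate Gaussian, for any fixed nonzero $u\in\R^{\dimFeature}$ the scalar $\dimReducedFeature\,\|f(u)\|^2/\|u\|^2$ is distributed as $\chi^2_{\dimReducedFeature}$, so $\expect[\|f(u)\|^2]=\|u\|^2$. The second step is to invoke the standard Chernoff/MGF argument for a chi-square random variable on both tails, using the expansions $\log(1-t)\leq -t-t^2/2$ and $\log(1+t)\geq t-t^2/2+t^3/3$ on $t\in(0,1)$ and then optimizing the Chernoff parameter. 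This yields the sharp two-sided concentration inequality
\[
\Prob{\bigl|\,\|f(u)\|^2-\|u\|^2\bigr|>\epsilon\|u\|^2}\leq 2\exp\!\left(-\frac{\dimReducedFeature(3\epsilon^2-2\epsilon^3)}{12}\right).
\]

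The final step is the union bound. Applying the display above to each of the $\binom{\numData}{2}$ difference vectors $u=x-y$ and summing gives a total failure probability at most $\numData(\numData-1)\exp(-\dimReducedFeature(3\epsilon^2-2\epsilon^3)/12)$. Substituting the hypothesis $\dimReducedFeature\geq 24\log\numData/(3\epsilon^2-2\epsilon^3)$ upper-bounds this expression by $\numData(\numData-1)/\numData^{2}=1-1/\numData<1$, so with strictly positive probability the random $f$ simultaneously satisfies both JL inequalities for every pair in $A$; a deterministic map with the claimed property therefore exists. The main obstacle is obtaining the chi-square concentration bound with exactly the constant $(3\epsilon^2-2\epsilon^3)/12$: that requires carefully optimizing the Chernoff parameter rather than settling for a looser $e^{-\dimReducedFeature\epsilon^2/c}$ bound. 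The remaining steps (rotational invariance, union bound, and inverting the logarithmic condition on $\dimReducedFeature$) are mechanical.
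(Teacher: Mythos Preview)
Your argument is correct and is essentially the classical Dasgupta--Gupta random-projection proof: Gaussian projection, chi-square concentration with the sharp exponent $(3\epsilon^{2}-2\epsilon^{3})/12$, and a union bound over the $\binom{\numData}{2}$ pairs. The arithmetic checks out: with $\dimReducedFeature\geq 24\log\numData/(3\epsilon^{2}-2\epsilon^{3})$ the per-pair failure probability is at most $2\numData^{-2}$, and summing over $\binom{\numData}{2}$ pairs gives total failure at most $(\numData-1)/\numData<1$.

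Note, however, that the paper does \emph{not} supply its own proof of this lemma. Lemma~\ref{lemma:jl} is simply stated as a known result and attributed to the literature (the paper only remarks that the map $f$ can be found by random projections in randomized polynomial time and cites~\cite{1974-JG}); no proof appears in either the main text or the appendices. So there is no ``paper's approach'' to compare against---your proposal is a valid self-contained proof of a statement the paper quotes without argument.
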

	The linear map $f$ in Lemma~\ref{lemma:jl} can be found using random projections in randomized polynomial time~\cite{1974-JG}.
	Now, we apply the JL lemma on the adjacency matrices  to obtain the projected adjacency matrices.
	
	\begin{definition}[Projected adjacency matrices]
		\label{def:processed-adjacency-matrix}
		The projected adjacency matrices, denoted as $\processedAdjacencyMatrixDataset$ for the BCNs $\dataset$, are obtained by applying the JL lemma (Lemma~\ref{lemma:jl}) to the adjacency matrices $\adjacencyMatrixDataset$~\eqref{eq:adjacency-matrix}, with an appropriately chosen value of $\epsilon$. Equivalently, \begin{align}
		\label{eq:processed-adjacency-matrix}
		\processedAdjacencyMatrix_{\graph_i}=f(\adjacencyMatrix_{\graph_i}),i=1,2,\ldots,\numData\end{align} 
		where the linear map $f$ satisfies Lemma~\ref{lemma:jl}.       
	\end{definition}
 
    \subsubsection*{Comparison between the Adjacency Matrix $\adjacencyMatrix_\graph$ of the BCN And The Projected Adjacency Matrix $\processedAdjacencyMatrix_\graph$}
		To apply the JL lemma (Lemma~\ref{lemma:jl}), we first vectorize the adjacency matrix $\adjacencyMatrix_\graph$. Denote the vectorized $\adjacencyMatrix_\graph$ as $X$. Then, we compute a suitable linear transformation $\linearTransformation$ that satisfies the JL lemma for the chosen value of $\epsilon$. Now,
		$Y=\linearTransformation X\Rightarrow \Expect{Y}=\linearTransformation\Expect{X}\Rightarrow \Var{Y}=\linearTransformation\Var{X}\linearTransformation\trans$.
		As the proposed non-parametric statistical detector detects a change in the mean and the variance of the adjacency matrices of the random BCNs, we can use the projected adjacency matrix $\processedAdjacencyMatrix_\graph$ to compute the test statistic. This is because the mean of the projected adjacency $\processedAdjacencyMatrix_\graph$ is a linear transformation of the mean of the adjacency matrix $\adjacencyMatrix_\graph$, and the variance of the projected adjacency matrix $\processedAdjacencyMatrix_\graph$ is similar to the variance of the adjacency matrix $\adjacencyMatrix_\graph$.
    
    \subsection{Algorithm for Detecting  Eclipse Attack}
 Having developed the necessary mathematical tools, we present the eclipse attack detection algorithm. Algorithm \ref{alg:change-detection} outlines the steps in this algorithm. Given the large dimension of the BCN, we initially employ the JL lemma to reduce dimensionality while approximately preserving the test statistic defined in~\eqref{eq:statistic-definition}. We also assume that the eclipse attack do not occur near the endpoints\footnote{We assume that the eclipse attack do not occur near the endpoints of the sequence of communication networks. To detect an eclipse attack near the endpoints, the detector can use an overlapping sequence of BCNs, ensuring that the attack takes place away from the endpoints for at least one batch. Alternatively, one can refine the test statistic to detect an eclipse attack near endpoints (as explored in~\cite{2020:LH-et-al}), a topic we plan to investigate in future research.
}, i.e., $\changePointExact\in\I^+,\frac{\changePointExact}{\numData}\in(\boundary,1-\boundary)$ for some $\boundary>0$.
	\begin{algorithm}[ht]
		\caption{Algorithm for detecting an eclipse attack on a blockchain network}
		\begin{algorithmic}[1]
			\Require Sequence of adjacency matrices $\adjacencyMatrixDataset$ of the random BCNs  $\dataset$ at time $t=1,2,\ldots,\numData$, respectively (Definition~\ref{def:hypothesis-testing}).
			\State \textbf{Dimensionality reduction:} Compute the projected adjacency matrices $\processedAdjacencyMatrixDataset$~\eqref{def:processed-adjacency-matrix} using the JL lemma.
			\State \textbf{Test statistic:} Compute the test statistic $\testStat_{\changePointGuess,\numData}$~\eqref{eq:statistic-definition} using the projected adjacency matrices $\processedAdjacencyMatrixDataset$ for $\changePointGuess=1,2,\ldots,\numData-1,\frac{\changePointGuess}{\numData}\in(\boundary,1-\boundary)$ for some $\boundary>0$. 
			\State \textbf{Asymptotic quantile:} Choose a level of significance $\alpha\in[0,1]$. Compute $q_{1-\alpha}=(1-\alpha)$ quantile of the distribution $\max_{\changePointGuess\in
\{1,2,\ldots,\numData-1\},\frac{\changePointGuess}{\numData}\in(\boundary,1-\boundary)}\brownianBridge^2\left(\frac{\changePointGuess}{\numData}\right)$. Here, $\brownianBridge(t)$
   is a Brownian bridge process on $[0,1]$ with the covariance function $C(t_1, t_2)=1$ for $0\leq t_1 \leq t_2\leq 1$.
			\If{$\max_{\changePointGuess\in\{1,2,\ldots\numData-1\},\frac{\changePointGuess}{\numData}\in(\boundary,1-\boundary)} \numData \testStat_{\changePointGuess,\numData}<q_{1-\alpha}$} 
			\Return No eclipse attack detected.
			\Else\quad\Return Eclipse attack detected  at time $$\changePointGuess^*:=\argmax_{\changePointGuess\in\{1,2,\ldots,\numData-1\},\frac{\changePointGuess}{\numData}\in(\boundary,1-\boundary)}\testStat_{\changePointGuess,\numData}$$
			\EndIf 
		\end{algorithmic}
		\label{alg:change-detection}
	\end{algorithm}	
	Let $k$ denote the dimension of the adjacency matrices. Then, the complexity of the Algorithm~\ref{alg:change-detection} is $O\left(\numData^2 k+ \numData|\graphSpace|\right)$, where $\graphSpace$ is defined in~\eqref{eq:adjacency-matrix}.
 
	To summarize, we designed an algorithm to detect an eclipse attack on a blockchain network. The proposed test statistic was based on Fréchet change detection~\cite{2020:PD-HM}. We also used the JL lemma to reduce the dimension of the BCNs.

 \section{Weak Convergence Analysis of Eclipse Attack Detector}
	\label{sec:analysis-algorithm}
	In this section, we analyze the test statistic for the proposed eclipse attack detector (Algorithm~\ref{alg:change-detection}). Under $H_0$ (absence of an eclipse attack), we prove that a scaled test statistic converges weakly to the square of a Brownian bridge process. Under $H_1$ (presence of an eclipse attack), we show that the peak of the test statistic estimates the onset of the eclipse attack.

    \subsection{Weak Convergence of Test Statistic}
	Our first result (Theorem~\ref{thm:null-hypothesis-results}) analyzes the asymptotics of the test statistic $\testStat_{\changePointGuess,\numData}$~\eqref{eq:statistic-definition} under the null hypothesis $H_0$~\eqref{eq:hypothesis}, i.e., absence of an eclipse attack on the blockchain network. Note that $\{\testStat_{\changePointGuess,\numData},\changePointGuess=1,2,\ldots,\numData-1\}$~\eqref{eq:statistic-definition}, represents a discrete-time stochastic process. 
 As is customary in weak convergence analysis~\cite{1984:HK,2009:SE-TK}, we first construct a continuous time stochastic process $\testStat_\numData(\numData t)$ by interpolating the discrete time test statistic process $\{\testStat_{\changePointGuess,\numData}\}$. 
  \begin{align}
        \label{eq:interpolation}\testStat_\numData(\numData t)&=\testStat_{\changePointGuess,\numData}\\
        \nonumber&\text{ for } \numData t\in [\changePointGuess,\changePointGuess+1),\;\changePointGuess=0,1,\ldots,\numData-1
    \end{align}
  The continuous time process $\testStat_\numData(\numData t)$ has sample paths in the function space $D[0,1]$, namely the space of functions that are continuous on the right with limit on the left (cadlag functions). 
   We define a scaled test statistic continuous time stochastic process $\scaledTestStat_\numData(t)$ as follows:
    \begin{align}
        \label{eq:scaled-test-stat}\scaledTestStat_\numData(t):=\numData\testStat_\numData(\numData t)
    \end{align}
 Theorem~\ref{thm:null-hypothesis-results} shows  that as $\numData\rightarrow \infty$, the scaled test statistic continuous time stochastic process
$\scaledTestStat_\numData(t) $ converges weakly (in Skorohod metric~\cite{2013:PB}) to the square of a Brownian bridge stochastic process.  Note that the weak convergence approach   deals  with the convergence of scaled sequences of the test statistic that are treated as stochastic processes rather than random variables. Thus, the weak convergence approach specifies convergence for  the entire trajectory of the test statistic of the detection algorithm.

	\begin{theorem}
		\label{thm:null-hypothesis-results}
		Assume that the eclipse attack do not occur near the endpoints, i.e., $\frac{\changePointExact}{\numData}\in[\boundary,(1-\boundary)]$ for some $\boundary>0$, where $\changePointExact$ is defined in~\eqref{eq:hypothesis}. Then, under $H_0$ (absence of an eclipse attack), the scaled test statistic~\eqref{eq:scaled-test-stat} process converges weakly:
    \begin{align*}
  \scaledTestStat_\numData(t)&\Rightarrow \brownianBridge^2(t)\\
  \intertext{Also, the continuous mapping theorem implies }\max_{t\in[\boundary,(1-\boundary)]}\scaledTestStat_\numData(t)&\Rightarrow \max_{t\in[\boundary,(1-\boundary)]}\brownianBridge^2(t)
		\end{align*}
		Here $\Rightarrow$ denotes weak convergence\footnote{
  Weak convergence in functional space is a generalization of the weak convergence in distribution for random variables.
  A sequence of probability measures $\mu_n$ converges weakly to the probability measure $\mu$ if, for all bounded and continuous test functionals $f$, the expected value of $f$ with respect to $\mu_n$ converges to the expected value of $f$ with respect to $\mu$, i.e., $\mathbb{E}_{\mu_n}[f] \rightarrow \mathbb{E}_{\mu}[f]$.
  }; 
  $\brownianBridge$ is a standardized Brownian bridge process\footnote{A standardized Brownian bridge on $[0,T]$ is a continuous-time stochastic process whose probability distribution is the conditional probability of the Weiner process $W(t)$ subject to the condition that $W(0)=W(T)=0$ with the covariance function $C(t_1, t_2)=1,\,0\leq t_1 \leq t_2\leq 1$.} with the covariance function $C(t_1, t_2)=1,\,0\leq t_1 \leq t_2\leq 1$.

	\end{theorem}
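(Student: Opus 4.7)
The plan is to reduce the metric-space statistic to a classical partial-sum process and then invoke Donsker's functional CLT together with the continuous mapping theorem. Under $H_0$ let $\mu$ and $V$ denote the population Fréchet mean and variance of $\distributionEclipseAttackAbsent$, set $U_i := \metric^2(\graph_i, \mu)$ so that $\Expect{U_i} = V$, let $\tau^2$ denote the variance of $U_i$, and define the partial-sum process
\begin{align*}
S_\numData(t) := \frac{1}{\sqrt{\numData}} \sum_{i=1}^{\lfloor \numData t \rfloor} (U_i - V).
\end{align*}
Donsker's theorem gives $S_\numData \Rightarrow \tau W$ in $D[0,1]$, so by continuous mapping $S_\numData(t) - t\,S_\numData(1) \Rightarrow \tau\,\brownianBridge(t)$, where $\brownianBridge$ is the standard Brownian bridge.

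The first substantive step is an M-estimation expansion linking the Fréchet quantities in~\eqref{eq:test-statistic-predefinition} to $\bar{U}_n := n^{-1}\sum_{i=1}^n U_i$. Using a Taylor expansion of $\omega \mapsto \metric^2(\graph_i,\omega)$ around $\mu$, the vanishing of its first-order term in expectation (since $\mu$ minimizes the population Fréchet functional), and $\sqrt{n}$-consistency of $\hat\mu_\changePointGuess$, I would show that, uniformly for $\changePointGuess = \lfloor \numData t \rfloor$ with $t \in [\boundary, 1-\boundary]$,
\begin{align*}
\hat{V}_\changePointGuess = \bar{U}_\changePointGuess + O_P(1/\numData), \qquad \hat{V}_\changePointGuess^C = \bar{U}_\changePointGuess + O_P(1/\numData),
\end{align*}
with analogous expansions for the tail block. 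Subtracting, the cross-mean combination $\hat{V}_\changePointGuess^C - \hat{V}_\changePointGuess + \hat{V}_{\numData-\changePointGuess}^C - \hat{V}_{\numData-\changePointGuess}$ is $O_P(1/\numData)$, so after the scale $\numData \cdot \changePointGuess(\numData-\changePointGuess)/(\numData^2 \hat\sigma^2)$ its squared contribution to $\numData\,\testStat_{\changePointGuess, \numData}$ vanishes uniformly and only the variance-change term survives.

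Assembling these expansions and telescoping partial sums gives
\begin{align*}
\sqrt{\numData}\bigl(\hat{V}_\changePointGuess - \hat{V}_{\numData-\changePointGuess}\bigr) = \frac{S_\numData(t) - t\,S_\numData(1)}{t(1-t)} + o_P(1).
\end{align*}
Since $\hat\sigma^2 \to \tau^2$ by the strong law of large numbers, this yields
\begin{align*}
\scaledTestStat_\numData(t) = \frac{t(1-t)}{\hat\sigma^2}\,\numData\bigl(\hat{V}_\changePointGuess - \hat{V}_{\numData-\changePointGuess}\bigr)^2 + o_P(1) \;\Rightarrow\; \frac{\brownianBridge(t)^2}{t(1-t)},
\end{align*}
which, after identifying the limit with the paper's unit-variance normalization of the Brownian bridge, is the stated weak convergence. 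The supremum statement then follows from the continuous mapping theorem, since $\sup_{t \in [\boundary, 1-\boundary]}$ is a continuous functional on $D[\boundary, 1-\boundary]$ in the Skorohod topology and $t(1-t)$ is bounded away from zero on this compact sub-interval, which also gives tightness of $\scaledTestStat_\numData$ from tightness of $S_\numData$.

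The hard part will be carrying out the M-estimation expansion on $\graphSpace$, which is discrete and not a Euclidean vector space: the classical Taylor argument must be replaced by an empirical-process/curvature argument in the spirit of Dubey and Müller, establishing $\sqrt{n}$-consistency of $\hat\mu_\changePointGuess$ and a quadratic lower bound on the population Fréchet functional near $\mu$. Because $\graphSpace$ consists of $\{0,1\}$-valued matrices of fixed finite dimension, the required entropy and identifiability conditions hold essentially for free, and the adjacency-matrix representation together with the Frobenius metric~\eqref{eq:distance-metric} lets the expansions be executed entrywise, reducing the abstract machinery to a finite-dimensional problem.
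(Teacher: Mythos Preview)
Your proposal is correct and follows essentially the same route as the paper's proof: decompose $\scaledTestStat_\numData$ into the variance-difference term and the cross (mean-contamination) term, show the latter is asymptotically negligible under $H_0$ via consistency of the Fr\'echet estimators, and obtain the Brownian-bridge limit for the former via a Donsker-type argument, then finish with Slutsky and the continuous mapping theorem. The only cosmetic difference is that you reduce explicitly to the i.i.d.\ partial-sum process of $U_i=\metric^2(\graph_i,\mu)$ and invoke the classical Donsker theorem plus continuous mapping, whereas the paper (following Dubey--M\"uller) verifies finite-dimensional convergence and asymptotic equicontinuity of $Z_\numData(t)=\sqrt{\numData\,\testStat_\numData^A(\numData t)}$ directly; these are two presentations of the same functional CLT, and your identification of the limit $\brownianBridge(t)^2/(t(1-t))$ with the paper's unit-variance ``standardized'' bridge is exactly right.
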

	\begin{proof}
	Appendix~\ref{sec:proof-null-hypothesis-results} of the supplementary material.
	\end{proof}
	
  Convergence to a Brownian bridge instead of Brownian process in Theorem~\ref{thm:null-hypothesis-results} is intuitive because $\scaledTestStat_\numData(t)\propto t(1-t)$. Theorem~\ref{thm:null-hypothesis-results} is used in steps 3-4 of Algorithm~\ref{alg:change-detection} to detect an eclipse attack. In practice, we declare the presence of an eclipse attack on a blockchain network if the maximum of the scaled test statistic exceeds the $0.95$ quantile, denoted as $q_{0.95}$, of the distribution $\max_{t \in [\boundary,1-\boundary]}\brownianBridge^2(t)$.
	
	Our second result 
 (Theorem~\ref{thm:alternate-hypothesis} below) 
 investigates the test statistic $\testStat_\numData(\numData t)$~\eqref{eq:statistic-definition} under $H_1$~\eqref{eq:hypothesis}, i.e., presence of an eclipse attack on the blockchain network. This result estimates the onset of the eclipse attack. Before presenting the theorem, we need to define the limiting test statistic $\testStat(\numData t)$:
\begin{align}
\label{eq:limit-test-statistic}
\testStat( t):=\lim_{\numData\rightarrow\infty}\testStat_\numData(\numData t)
\end{align}
Here, the test statistic $\testStat_\numData(\numData t)$ converges to $\testStat(t)$ in probability~\cite{2020:PD-HM}.
	\begin{theorem}
		\label{thm:alternate-hypothesis}
		Assume that the eclipse attack do not occur near the endpoints, i.e., $\frac{\changePointExact}{\numData}\in[\boundary,(1-\boundary)]$ for some $\boundary>0$, where $\changePointExact$ is defined in~\eqref{eq:hypothesis}. Then, under $H_1$ (presence of an eclipse attack), the maximum of the limiting test statistic $\testStat(\numData t)$ defined in~\eqref{eq:limit-test-statistic}  occurs at the onset, $\changePointExact$  of the eclipse attack:
        \begin{align*}
            \lim_{\numData\rightarrow\infty}\frac{\tau}{\numData}=\argmax_{t \in [\boundary,(1-\boundary)]}\testStat(t)
        \end{align*}
        Let $\changePointExact_\numData=\numData\arg\max_{t \in [\boundary,(1-\boundary)]}\testStat_\numData(\numData t)$ where $\testStat_\numData(\numData t)$ is defined in~\eqref{eq:interpolation}. Then for $\gamma>0$ the following holds
		\begin{align*}
		\begin{aligned}
		\Prob{\left|\frac{{\tau}_{\numData}-\tau}{\numData}\right|>\gamma}\rightarrow 0
	\end{aligned}
		\end{align*}        
	\end{theorem}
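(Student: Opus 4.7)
The plan is to first characterize the limiting test statistic $\testStat(t)$ explicitly under $H_1$, then verify that this deterministic limit is uniquely maximized at $t = \rho := \lim_{\numData \to \infty} \changePointExact/\numData$, and finally upgrade pointwise in-probability convergence to convergence of the argmaxes. The first claim of the theorem reduces to the middle step, and the second follows from an argmax continuous mapping theorem applied to the scaled process $\scaledTestStat_\numData$.

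For a candidate split $\changePointGuess = \lfloor \numData t \rfloor$, the first batch $\dataRV_1, \ldots, \dataRV_\changePointGuess$ contains an asymptotic proportion $\min(t,\rho)/t$ of samples from $\distributionEclipseAttackAbsent$ and the remainder from $\distributionEclipseAttackPresent$, and the second batch is complementary. Applying the Fréchet law of large numbers from \cite{2020:PD-HM}, the empirical Fréchet mean $\hat{\mu}_\changePointGuess$ converges in probability to the Fréchet mean of the mixture active on its batch, and the variance-type quantities $\hat{V}_\changePointGuess, \hat{V}_{\numData-\changePointGuess}, \hat{V}_\changePointGuess^C, \hat{V}_{\numData-\changePointGuess}^C$ together with the normalizer $\hat{\sigma}^2$ converge to deterministic quantities expressible in terms of $V_1, V_2, \mu_1, \mu_2$, the cross-moments $\Expect{\metric^2(\dataRV, \mu_2)}$ under $\distributionEclipseAttackAbsent$ and $\Expect{\metric^2(\dataRV, \mu_1)}$ under $\distributionEclipseAttackPresent$, and the scalars $t$ and $\rho$. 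Substituting these limits into \eqref{eq:statistic-definition} yields a piecewise-analytic formula for $\testStat(t)$ on $[\boundary, \rho]$ and on $[\rho, 1-\boundary]$.

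The main hurdle is showing that $\testStat(t)$ attains its unique maximum at $t = \rho$. At this clean split the first batch is asymptotically pure $\distributionEclipseAttackAbsent$ and the second pure $\distributionEclipseAttackPresent$, so $\hat{V}_\changePointGuess - \hat{V}_{\numData-\changePointGuess} \to V_1 - V_2$, and the cross-mean combination $(\hat{V}_\changePointGuess^C - \hat{V}_\changePointGuess)+(\hat{V}_{\numData-\changePointGuess}^C - \hat{V}_{\numData-\changePointGuess})$ converges to $2\,\metric^2(\mu_1,\mu_2)$ via the Fréchet identity $\Expect{\metric^2(\dataRV,\omega)} = V + \metric^2(\mu,\omega)$ that holds in object spaces under the regularity conditions of \cite{2020:PD-HM}. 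For any $t \ne \rho$, contamination of one batch with samples from the other distribution strictly shrinks both squared separation terms in \eqref{eq:statistic-definition}, and the smooth prefactor $t(1-t)/\hat{\sigma}^2$ cannot compensate. I would verify this by differentiating the piecewise formula: the derivative is strictly positive on $(\boundary,\rho)$ and strictly negative on $(\rho, 1-\boundary)$, with the two branches agreeing at the kink at $\rho$. The nontrivial algebraic input is that the Fréchet mean of a nondegenerate mixture of $\distributionEclipseAttackAbsent$ and $\distributionEclipseAttackPresent$ differs from both $\mu_1$ and $\mu_2$ whenever $\mu_1 \ne \mu_2$, so the cross-mean term strictly decays as $t$ moves away from $\rho$; in the degenerate case $\mu_1=\mu_2$ with $V_1 \ne V_2$ the conclusion follows from the variance term alone.

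For the probability statement on $\changePointExact_\numData$, the limit $\testStat(t)$ is continuous on the compact set $[\boundary, 1-\boundary]$ with the unique well-separated maximum at $\rho$ established above, and $\testStat_\numData(\numData t) \to \testStat(t)$ in probability pointwise. I would upgrade this to uniform convergence in probability by establishing tightness of $\{\testStat_\numData(\numData\,\cdot)\}$ in the Skorohod topology on $D[0,1]$; the requisite modulus-of-continuity bound follows from the Lipschitz-in-$t$ structure of the partial Fréchet sums, by arguments parallel to those used in the proof of Theorem~\ref{thm:null-hypothesis-results}. The argmax continuous mapping theorem then yields $\changePointExact_\numData/\numData \to \rho$ in probability, which is precisely $\Prob{\left|(\changePointExact_\numData - \changePointExact)/\numData\right| > \gamma} \to 0$ for every $\gamma > 0$.
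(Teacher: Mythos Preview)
Your overall architecture matches the paper's: identify the deterministic limit $\testStat(t)$ under $H_1$, show it is maximized at the true change fraction, and then pass to the finite-sample argmax. The execution of the middle step differs. The paper, following \cite{2020:PD-HM}, does not differentiate a piecewise formula; instead it fixes a side (say $\changePointGuess\geq\changePointExact$, with $\alpha=\changePointExact/\changePointGuess$) and establishes the direct chain
\[
\testStat(\numData t)\;\leq\;\frac{t(1-t)}{\sigma^2}\max\!\Bigl(\alpha^2(V_1-V_2)^2,\,\bigl(\alpha(V_1-V_2)+\min(\alpha\Delta_1,(1-\alpha)\Delta_2)\bigr)^2\Bigr)\;\leq\;\testStat(\numData\changePointExact),
\]
where $\Delta_1=\mathbb{E}_{\distributionEclipseAttackAbsent}[\metric^2(\adjacencyMatrix_\graph,\mu_2)]-V_1$ and $\Delta_2=\mathbb{E}_{\distributionEclipseAttackPresent}[\metric^2(\adjacencyMatrix_\graph,\mu_1)]-V_2$, the second inequality coming from a sub-case analysis. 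This inequality route buys generality: it works in an arbitrary metric space without any bias--variance decomposition.

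That is where your argument needs care. The identity $\Expect{\metric^2(\dataRV,\omega)}=V+\metric^2(\mu,\omega)$ that you invoke to get the limit $2\,\metric^2(\mu_1,\mu_2)$ is a Hilbert-space fact (valid here because $\metric$ is the Frobenius norm), \emph{not} a consequence of the regularity conditions in \cite{2020:PD-HM}, which are stated for general metric spaces where no such Pythagorean relation holds. So your computation is correct in this paper's Euclidean setting, but your justification for it is misattributed; the paper and its source deliberately work with the raw cross-moments $\Delta_1,\Delta_2$ to avoid relying on that identity. If you want your proof to track the same level of generality, replace the differentiation-plus-identity step with the inequality chain above; otherwise, simply note that the Frobenius metric is Euclidean and proceed as you have.
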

	\begin{proof}
        Appendix~\ref{sec:proof-alternative-hypothesis} of  supplementary material.
	\end{proof}
    The second statement of Theorem~\ref{thm:alternate-hypothesis} gives an error bound for estimating the onset of the eclipse attack using finite samples of BCNs. Theorem~\ref{thm:alternate-hypothesis} is used in step 5 of Algorithm~\ref{alg:change-detection} to estimate the onset of the eclipse attack
    using the discrete-time test statistic $\testStat_{\changePointGuess,\numData}$~\eqref{eq:interpolation} .
     \subsection{Effect of the Processed Adjacency Matrices on the Test Statistic}
	Our final result compares  the test statistic computed using the projected adjacency matrices and the original adjacency matrices of the BCN\footnote{
See Sec.\ref{sec:comparison-test-statistic} of the supplementary material for a numerical example illustrating Theorem~\ref{thm:jl-lemma-test-statistic}.}. It shows that the false positive alarm rate of the detector is higher when the test statistic is computed using the projected adjacency matrix $\processedAdjacencyMatrix_\graph$.
 \begin{theorem}
 	\label{thm:jl-lemma-test-statistic}
Let $\testStat_\numData(\numData t)$ defined in~\eqref{eq:interpolation} denote the test statistic computed using the original adjacency matrices~\eqref{eq:adjacency-matrix}. Let $\tilde{\testStat}_\numData(\numData t)$ denote the test statistic computed using the projected adjacency matrices~\eqref{eq:processed-adjacency-matrix}. Under $H_0$~\eqref{eq:hypothesis}, as $\numData\rightarrow\infty$, using projected adjacency matrices to compute the test statistic leads to a higher false positive alarm rate:
\begin{align*}
    \lim_{\numData\rightarrow\infty}\tilde{\testStat}_{\numData}(\numData t) \geq \lim_{\numData\rightarrow\infty}\testStat_{\numData}(\numData t)
\end{align*}
Here, the convergence to the limit is in probability. Furthermore,
\begin{align*}
    \lim_{\numData\rightarrow\infty} \tilde{\testStat}_\numData(\numData t) \geq \frac{5\epsilon\, t(1-t)V}{\hat{\sigma}^2}
\end{align*}
where $V=\lim_{\numData\rightarrow\infty}\hat{V}$ ($\hat{V}$ is defined in~\eqref{eq:test-statistic-predefinition}); $\epsilon\in(0,1)$; and $\hat{\sigma}^2$ is the empirical variance computed in~\eqref{eq:test-statistic-predefinition}.
\end{theorem}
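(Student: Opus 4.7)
The plan is to apply the Johnson--Lindenstrauss bound pointwise to each distance appearing in Definition~\ref{def:test-statistic}, then pass to the large-$\numData$ limit under $H_0$ and exploit that the original test statistic $\testStat_\numData(\numData t)$ collapses to zero in probability while the JL-induced distortion of the projected Fréchet statistics persists and yields a non-vanishing floor in the worst case.

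Step one is to translate the JL guarantee into bounds on every quantity in~\eqref{eq:test-statistic-predefinition}. Because the Frobenius metric is Euclidean and the JL map $\linearTransformation$ is linear, the Fréchet mean in the projected space equals $\linearTransformation\hat{\mu}_{\changePointGuess}$; summing the pointwise inequality $(1-\epsilon)\metric^2 \leq \tilde{\metric}^2 \leq (1+\epsilon)\metric^2$ over the sample then gives
\begin{align*}
(1-\epsilon)\hat{V}_{\changePointGuess} \leq \tilde{\hat{V}}_{\changePointGuess} \leq (1+\epsilon)\hat{V}_{\changePointGuess},
\end{align*}
and the analogous two-sided bound for $\tilde{\hat{V}}_{\numData-\changePointGuess}$, $\tilde{\hat{V}}^C_{\changePointGuess}$, $\tilde{\hat{V}}^C_{\numData-\changePointGuess}$, and $\tilde{\hat{V}}$. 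Step two invokes the Fréchet consistency results underpinning Theorem~\ref{thm:null-hypothesis-results}: under $H_0$ each of $\hat{V}_{\changePointGuess},\hat{V}_{\numData-\changePointGuess},\hat{V}^C_{\changePointGuess},\hat{V}^C_{\numData-\changePointGuess},\hat{V}$ converges in probability to the common true Fréchet variance $V$, and the scaling $\scaledTestStat_\numData = \numData\,\testStat_\numData \Rightarrow \brownianBridge^2$ forces $\lim_{\numData\to\infty}\testStat_\numData(\numData t)=0$ in probability.

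Step three combines the two: writing $\lim\tilde{\hat{V}}_i = (1+a_i)V$ with $a_i\in[-\epsilon,\epsilon]$ dictated by the particular JL map, substitution into~\eqref{eq:statistic-definition} yields
\begin{align*}
\lim_{\numData\to\infty}\tilde{\testStat}_\numData(\numData t) = \frac{t(1-t)V^2}{\hat{\sigma}^2}\bigl\{(a_1-a_2)^2+(a_3-a_1+a_4-a_2)^2\bigr\}.
\end{align*}
Taking the worst-case value of the bracketed expression over the admissible hypercube $[-\epsilon,\epsilon]^4$ produces the explicit lower bound, and the first inequality $\lim\tilde{\testStat}_\numData(\numData t)\geq \lim\testStat_\numData(\numData t)=0$ is then immediate from nonnegativity of the right-hand side.

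The main obstacle is pinning down the explicit constant $5$: the four free parameters $a_1,\ldots,a_4$ are shared between the two squared terms, so they cannot be maximized independently, and the bracketed expression is naturally of order $\epsilon^2 V^2$ whereas the stated bound reads $\epsilon V$. Reconciling these orders likely requires either a leading-order small-$\epsilon$ expansion with a careful choice of which factor of $V$ to pull outside, or working directly with the additive JL displacement $\tilde{\metric}^2-\metric^2$ rather than the multiplicative $(1\pm\epsilon)$ form. A subsidiary technical point is verifying that the Fréchet convergences are uniform in $\changePointGuess/\numData\in[\boundary,1-\boundary]$, which follows from the same Glivenko--Cantelli-type arguments used in the appendix for Theorem~\ref{thm:null-hypothesis-results}.
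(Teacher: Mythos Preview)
Your overall architecture---JL sandwich on each Fr\'echet variance, $H_0$ consistency to kill $\testStat_\numData$, then substitute into~\eqref{eq:statistic-definition}---is exactly the paper's two-step structure. One minor technical difference in Step~1: you argue that $\tilde{\hat{\mu}}_{\changePointGuess}=\linearTransformation\hat{\mu}_{\changePointGuess}$ by linearity and then apply the JL bound to $\|A_{\graph_i}-\hat{\mu}_{\changePointGuess}\|$, but the JL guarantee of Lemma~\ref{lemma:jl} is only stated for the $\numData$ data points, not their sample mean. The paper sidesteps this by using the \emph{minimizer} characterization of the Fr\'echet mean: it picks a preimage $\alpha$ of $\mathbb{E}[\tilde{A}_\graph]$, applies JL to $\|A_{\graph_i}-\alpha\|$, and then replaces $\alpha$ by $\mathbb{E}[A_\graph]$ using $\sum_i\|A_{\graph_i}-\mathbb{E}[A_\graph]\|^2\le\sum_i\|A_{\graph_i}-\alpha\|^2$. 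Both routes land on the same $(1\pm\epsilon)$ sandwich, so this is cosmetic.

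The genuine gap is in Step~3, and it is a direction-of-inequality error rather than the scaling puzzle you flag. Writing $\lim\tilde{\hat V}_i=(1+a_i)V$ with $a_i\in[-\epsilon,\epsilon]$ is fine, but ``taking the worst-case value of the bracketed expression over $[-\epsilon,\epsilon]^4$'' yields an \emph{upper} bound on what $\lim\tilde{\testStat}_\numData$ could be, not a lower bound on what it must be. Since the infimum of $(a_1-a_2)^2+(a_3-a_1+a_4-a_2)^2$ over the hypercube is zero (take all $a_i$ equal), the JL sandwich alone cannot force a strictly positive floor for $\lim\tilde{\testStat}_\numData$; a JL map that happens to be a near-isometry on the relevant vectors gives $a_i\approx 0$ and $\tilde{\testStat}_\numData\to 0$. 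Your diagnosis that the bracket is $O(\epsilon^2 V^2)$ while the stated bound is $O(\epsilon V)$ is correct and is a symptom of the same issue: the paper's own proof simply asserts the final inequality ``using the previous inequalities'' without any intermediate computation, so there is no additional mechanism there for you to import. In short, your Steps~1--2 reproduce the paper; Step~3 cannot be completed as you describe, and the paper does not supply the missing argument either.
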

	\begin{proof}
		Appendix~\ref{sec:proof-jl-lemma-test-statistic} of  supplementary material.
	\end{proof}
	
	In summary, we have presented three key results on the test statistic~\eqref{eq:statistic-definition} for detecting an eclipse attack: 1) Under the null hypothesis $H_0$~\eqref{eq:hypothesis}, the first result ensures weak convergence of the maximum of the scaled test statistic to the maximum of the square of the Brownian bridge process. 2) Under the alternate hypothesis $H_1$~\eqref{eq:hypothesis}, the second result estimates the onset of the eclipse attack on the blockchain network. 
 3) The third result investigates the impact on the false alarm rate of the detector when using the projected adjacency matrices~\eqref{eq:processed-adjacency-matrix} to compute the test statistic.
	\section{Numerical Examples}
	\label{sec:simulations}
	\begin{figure*}[t]
		\centering
		\begin{subfigure}[t]{.45\textwidth}
			\includegraphics[width=0.92\textwidth]{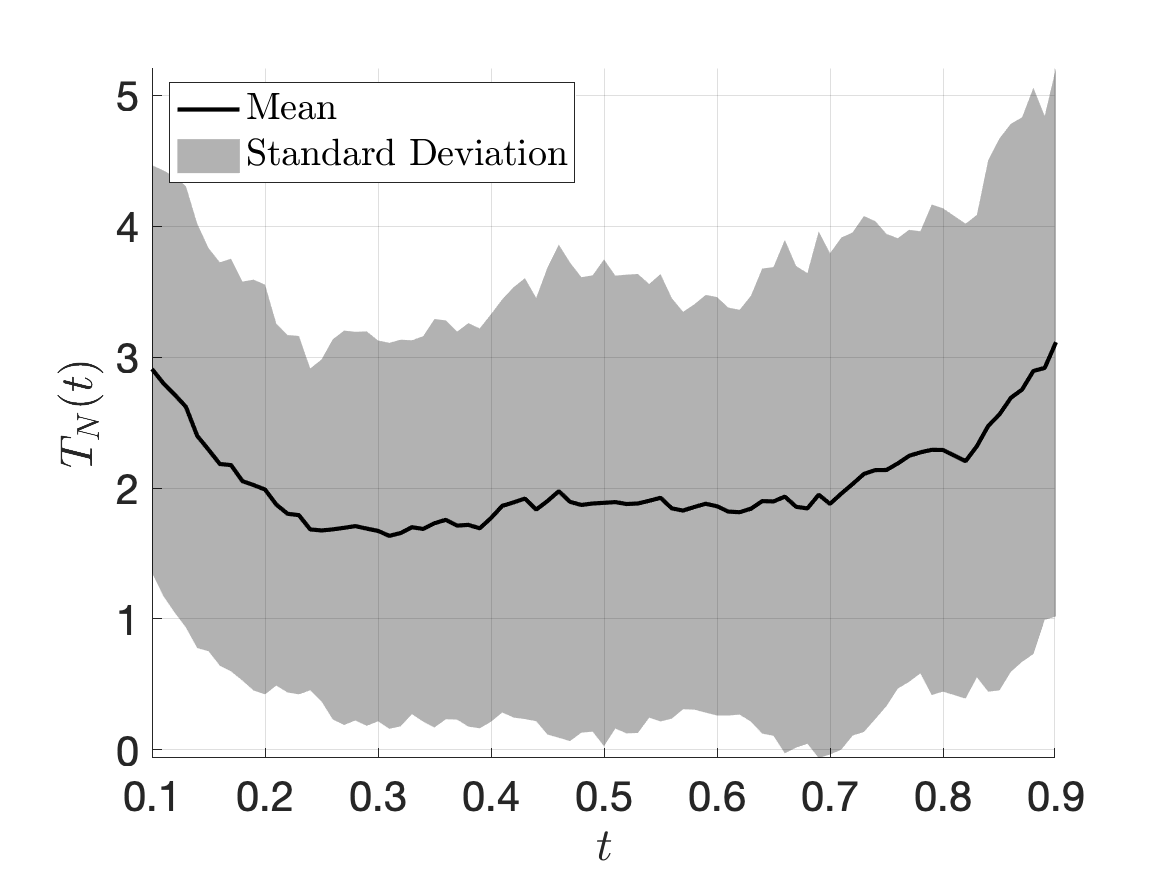}
			\caption{Absence of an eclipse attack.
   }    
		\end{subfigure}\hspace{0.2cm}
		\begin{subfigure}[t]{.45\textwidth}
			\includegraphics[width=0.92\textwidth]{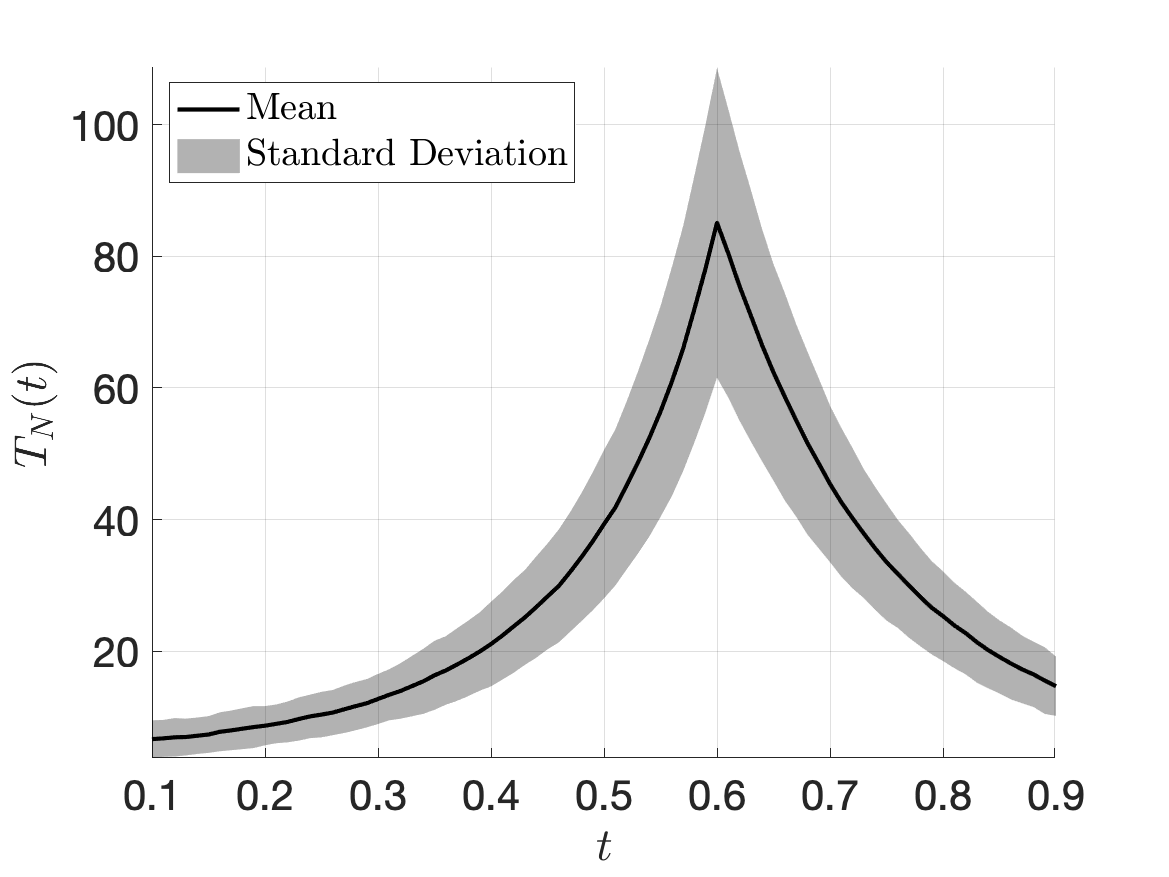}
			\caption{Presence of an eclipse attack at $\frac{\changePointExact}{\numData}=0.6$.
			}    
		\end{subfigure}
		\caption{Scaled test statistic $\scaledTestStat_\numData( t)$~\eqref{eq:scaled-test-stat} vs. $t$  in the absence/presence of an eclipse attack on the blockchain network (100 simulations). We used the projected adjacency matrices~\eqref{eq:processed-adjacency-matrix} of dimension 100 to compute $\scaledTestStat_\numData( t)$. 
  When there's an eclipse attack, the peak of the scaled test statistic is well above the $0.95$ quantile of the distribution $\brownianBridge^2(t)=q_{0.95}=9.05$ (Theorem~\ref{thm:null-hypothesis-results}). Moreover, the peak of the scaled test statistic gives the onset of the eclipse attack.
  Therefore, using the processed adjacency matrices decreases the computational cost of the detector while preserving the test statistic (See Sec.\ref{sec:comparison-test-statistic} of the supplementary material for a numerical example comparing the test statistics computed using the original and projected adjacency matrices).}
		\label{fig:test-statistic-jl100-adjacency-matrix}
	\end{figure*}
	In this section, we illustrate  our eclipse attack detection algorithm (Algorithm~\ref{alg:change-detection}) on a simulated
 dataset.
 Sec.\ref{sec:simulation-setup} describes the process of generating a simulated dataset using the eclipse attack model in Definition~\ref{def:eclipse-attack}. Sec.\ref{sec:numerical-proposed-detector} studies the performance of the proposed eclipse detector when applied to the simulated dataset. Sec.\ref{sec:numerical-proposed-detector-noisy} plots the ROC curve for the proposed eclipse detector on a noisy dataset. Sec.\ref{sec:comparison-test-statistic} studies the effect of projected adjacency matrices on the false alarm rate of the detector. Sec.\ref{sec:numerical-comparison-detector} compares the proposed eclipse attack detector against an eclipse attack detector based on the RFM. Sec.\ref{sec:numerical-rfc-regression} implements a RFM based regressor to estimate the onset of the eclipse attack. Sec.\ref{sec:numerical-rfc-sensitivity} studies the sensitivity of the RFM based detector to variations in the training dataset.\footnote{All numerical examples use Matlab. Our source codes are available in the Sec.\ref{sec:source-code} of the supplementary material.} 
    \subsection{Simulation Setup}
    \label{sec:simulation-setup}
	We use Definition~\ref{def:eclipse-attack} to generate a simulated dataset to illustrate the performance of the eclipse attack detector (Algorithm~\ref{alg:change-detection}). Our dataset represents a large-dimensional\footnote{The number of elements in the adjacency matrix for the BCN is $10^4$.} blockchain network with 100 users; it consists of a sequence of $1000$ adjacency matrices~\eqref{eq:adjacency-matrix} for the BCNs. In the absence of an eclipse attack, each blockchain user randomly and uniformly selects five neighbors. However, to simulate an eclipse attack, we introduced one victim user and two malicious users into the blockchain. The malicious users always include the victim user as one of their neighbors and other four neighbors are chosen uniformly at random. Equivalently, $\distributionEclipseAttackAbsent,\distributionEclipseAttackPresent$ in Definition~\ref{def:eclipse-attack} are given by
	\begin{align*}    
	\distributionEclipseAttackAbsent(\adjacencyMatrix_{\graph}(i,j)=1)&=\frac{5}{100}\\
	\distributionEclipseAttackPresent(\adjacencyMatrix_{\graph}(i,j)=1)&=\begin{cases}
	1,\quad i=1, j=99,100\\
	\frac{4}{99}, \quad i\neq1, j=99,100\\
	\frac{5}{100},\quad \text{otherwise}
	\end{cases}
	\end{align*}
	Here, the blockchain user with index 1 is the victim, and those with indexes 99 and 100 are the attackers. 
We assume that the nodes in the graph are labeled in descending order of their computation power. Since eclipse attacks target users with high computational power, our numerical examples focus on the first four rows of the adjacency matrix to reduce computational cost. 
 Henceforth, with an abuse of the notation, $\adjacencyMatrix_\graph$ refers to the first four rows of the adjacency matrix. 

    \subsection{Numerical Examples for the Proposed Eclipse Attack Detector}
    \label{sec:numerical-proposed-detector}
 We employ Algorithm~\ref{alg:change-detection} to detect an eclipse attack on the blockchain network. In step 1, we used the projected adjacency matrices~\eqref{eq:processed-adjacency-matrix} with the number of elements equal to 100. In step 2, we used the projected adjacency matrices to compute the scaled test statistic $\scaledTestStat_\numData( t)$~\eqref{eq:scaled-test-stat}. 
 In step 3, we set a significance level of 0.05 for rejecting the null hypothesis $H_0$. We computed the $q_{0.95}$ quantile of the distribution $\max_{t\in[\boundary,1-\boundary]}\brownianBridge^2(t)$ to be 9.05. Fig.~\ref{fig:test-statistic-jl100-adjacency-matrix} plots the scaled test statistic $\scaledTestStat_\numData( t)$ defined in~\eqref{eq:scaled-test-stat} for  both the absence and presence of an eclipse attack. In the presence of an eclipse attack, the peak of the test statistic surpasses the threshold $q_{0.95}$. Moreover, the peak of the scaled test statistic gives the onset of the eclipse attack (Theorem~\ref{thm:alternate-hypothesis}). 
 


    \subsection{ROC Curves for the Proposed Eclipse Attack Detector}
    \label{sec:numerical-proposed-detector-noisy}
    In this section, we investigate how the signal-to-noise ratio (SNR) of the dataset impacts the performance of the proposed eclipse attack detector (Algorithm~\ref{alg:change-detection}). We added noise in the adjacency matrix as follows:
    \begin{align}
        \label{eq:snr}
        Y=X\wedge N,\quad N=\indicator\left\{U>\operatorname{SNR}^{-1}\right\}
    \end{align}
    Here, $X, Y$ denote the noise-free and noisy adjacency matrix, respectively; $U$ denotes a uniform random variable on $[0,1]$; and $\wedge$ denotes the logical and operator.
    This noise simulates scenarios where the network monitor misses communication between two nodes.     
    \begin{figure}[ht]
        \centering
        \includegraphics[scale=0.4
        ]{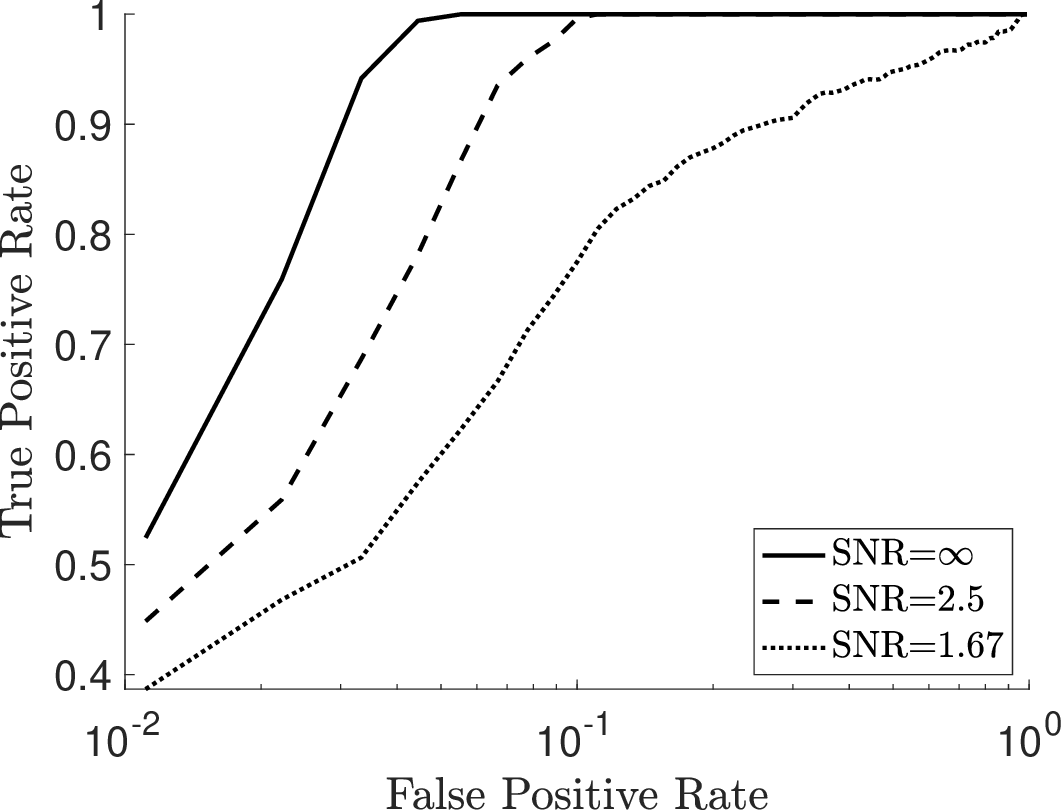}
        \caption{ROC curve of the proposed eclipse attack detector for various SNR values~\eqref{eq:snr}. As observed, the detector performs well with noisy datasets.}
        \label{fig:roc_proposed_snr}
    \end{figure}
    Fig.~\ref{fig:roc_proposed_snr} displays the ROC
     curve~\cite{2004:VB-et-al} for the proposed eclipse attack detector with various SNR values. As observed, the eclipse attack detector is robust to noise.

     \subsection{Comparison of the Test Statistic Computed using Original and Projected Adjacency Matrices}
    \label{sec:comparison-test-statistic}
    \begin{figure*}[t]
        \begin{subfigure}[t]{0.45\textwidth}
            \centering
        \includegraphics[scale=0.42]{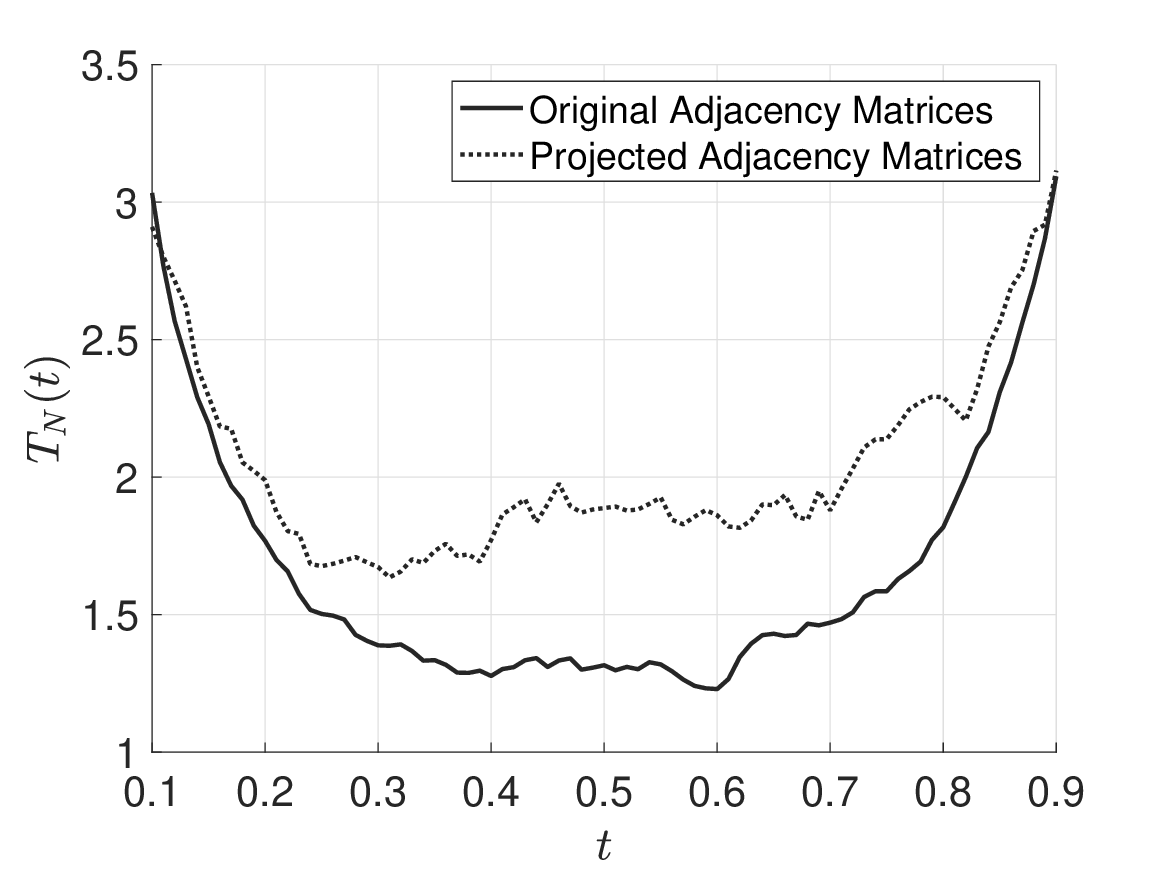}
        \caption{Absence of an eclipse attack on the blockchain network.}
        \label{fig:compare-test-stat-no-attack}
        \end{subfigure}
        \begin{subfigure}[t]{0.45\textwidth}
            \centering
        \includegraphics[scale=0.42]{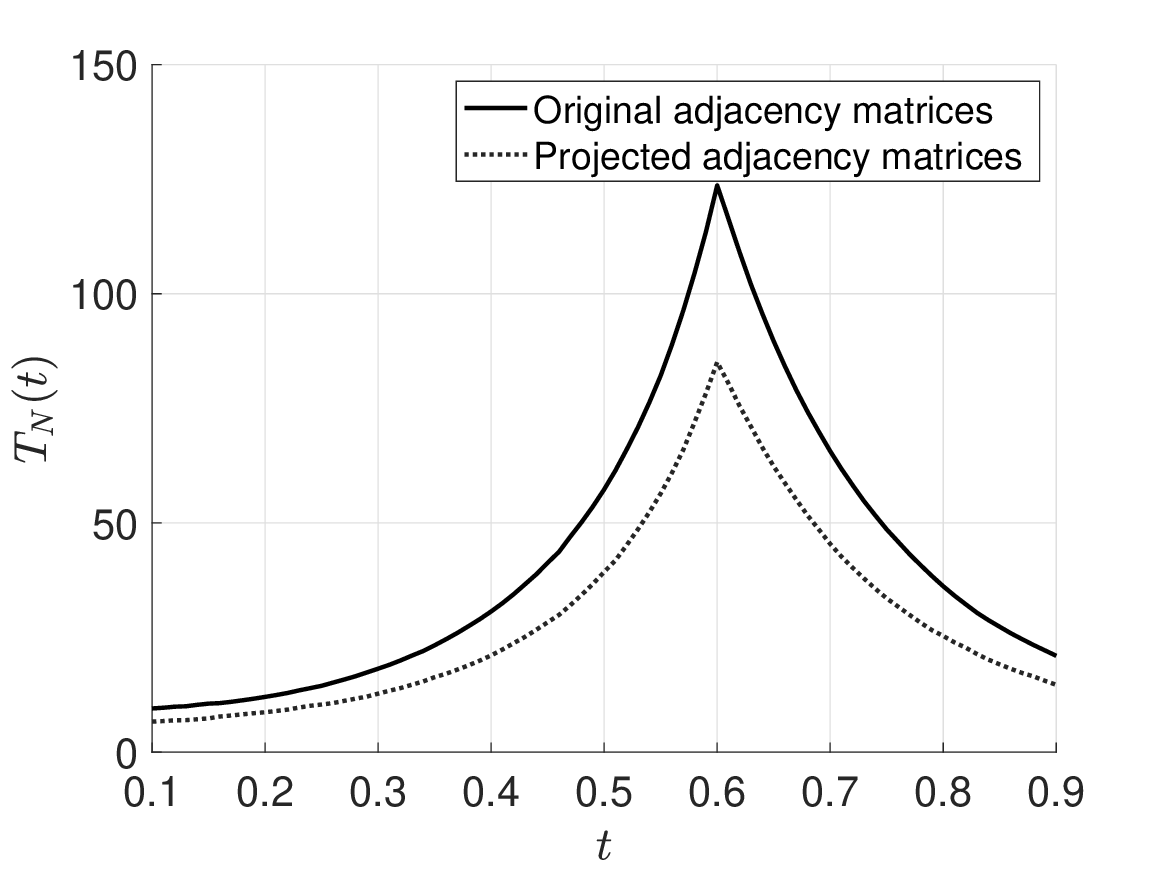}
        \caption{Presence of an eclipse attack on the blockchain network.}
        \label{fig:compare-test-stat-attack}
        \end{subfigure}        
        \caption{Comparison of the scaled test statistic $\scaledTestStat_\numData(t)$ computed using original and projected adjacency matrices. The scaled test statistic is averaged over 100 simulations. As in Sec.\ref{sec:numerical-proposed-detector}, we use the first four rows of the adjacency matrices. Therefore, the number of elements in the original adjacency matrix is 400. We used the JL lemma to obtain the projected adjacency matrices of dimension 100. As observed, the computing the scaled test statistic using the projected adjacency matrices leads to higher false positive and false negative alarm rate.}
    \end{figure*}
    
    Recall that in Theorem~\ref{thm:jl-lemma-test-statistic}, we showed that using the projected adjacency matrices to compute the scaled test statistic leads to a higher false positive alarm rate. Fig.~\ref{fig:compare-test-stat-no-attack} illustrates the impact of projected adjacency matrices on the false alarm rate.  As in Sec.\ref{sec:numerical-proposed-detector}, we use the first four rows of the adjacency matrices. Therefore, number of elements in the original adjacency matrix is 400. We used the JL lemma to obtain the projected adjacency matrices of dimension 100. 
    
    Moreover, in Fig.~\ref{fig:compare-test-stat-attack}, we show using a numerical example that computing the scaled test statistics using the projected adjacency matrices leads to higher false negative rate.

    The two numerical examples justifies the heuristic that the JL lemma approximately preserves the test statistic.
    
    \subsection{Comparison of the Proposed Eclipse Attack Detector with a RFM based  Detector}
    \label{sec:numerical-comparison-detector}
    This section compares the performance of our proposed eclipse attack detector with a RFM~\cite{2019:JS-et-al} based detector. 

To begin, we trained a random forest classifier to detect an eclipse attack on a blockchain network. The training dataset consisted of 390 data points, each corresponding to a sequence of 1000 adjacency matrices for the BCNs (Sec. \ref{sec:simulation-setup}). The simulated dataset was free from noise. 
If a sequence of BCNs was free from an eclipse attack, it was labeled as `0'; otherwise, it was labeled as `1'. 
Following the training of the random forest classifier, 
we validated its performance on a test dataset of size 287. 
    The accuracy of the RFM based detector\footnote{The RFM based detector requires a separate regressor to detect the onset of the eclipse attack. We  study its performance  in Appendix~\ref{sec:numerical-rfc-regression} of the supplementary material.} and the proposed eclipse attack detector (Algorithm \ref{alg:change-detection}) is summarized in Table~\ref{tab:comparison-detection}. Fig.\ref{fig:roc} plots the ROC curve of the two detectors.

    \begin{figure}[ht]
        \centering
        \includegraphics[scale=0.38]{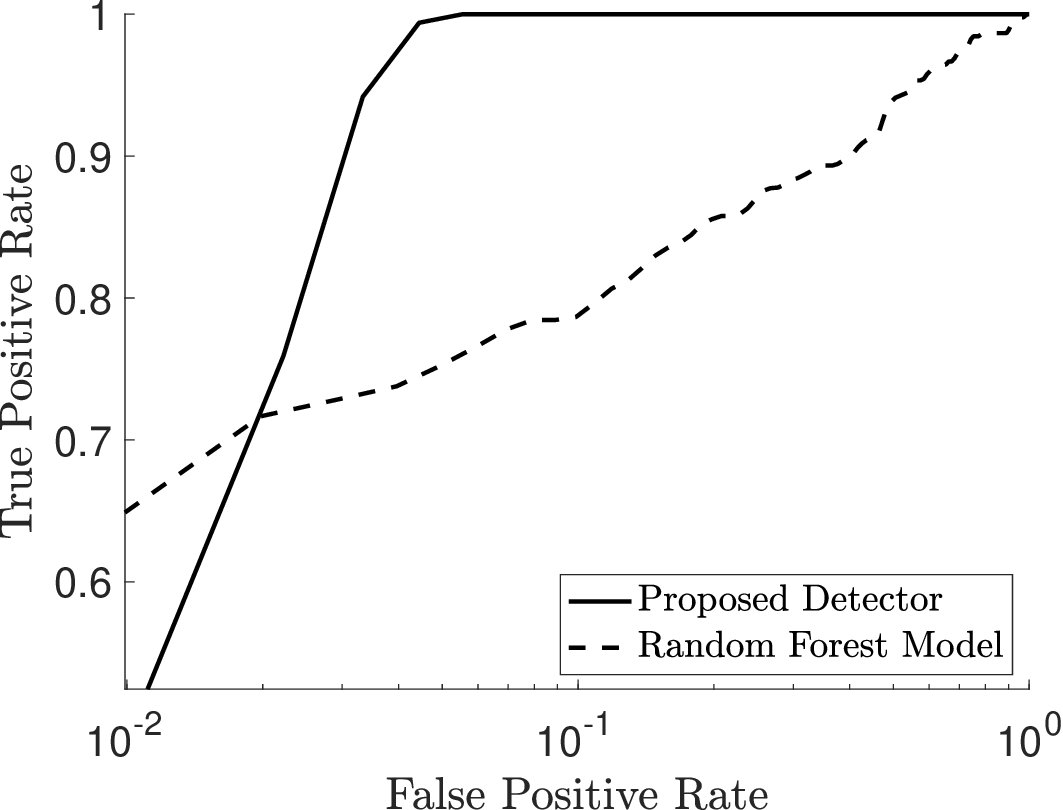}
        \caption{ROC curve of the proposed eclipse attack detector and the RFM based for a dataset with SNR=$\infty$~\eqref{eq:snr}. The proposed detector outperforms the RFM based detector when the false positive rate is high. Note that the RFM based detector requires a training dataset and is sensitive to a training dataset (See Appendix~\ref{sec:numerical-rfc-sensitivity} for a study on sensitivity of the RFM based detector to a training dataset). In contrast, the proposed detector did not require a training dataset.}
        \label{fig:roc}
    \end{figure}
    \begin{table}[ht]
        \centering
        \begin{tabular}{l|l}
            \hline
            \textbf{Detector} & \textbf{Accuracy}\\
            \hline
             Proposed Detector (Algorithm~\ref{alg:change-detection}) &  97.49\%\\
             Random Forest Model & 85.31\%\\
             \hline
        \end{tabular}
        \caption{Accuracy of eclipse attack detectors on a dataset with SNR=$\infty$~\eqref{eq:snr} (100 simulations). 
        }
        \label{tab:comparison-detection}
    \end{table}

	\subsection{RFM based Detector to Estimate the Onset of the Eclipse Attack} 
    \label{sec:numerical-rfc-regression}
    Recall  that in Sec.\ref{sec:numerical-comparison-detector}, we employed a detector based on a RFM to detect an eclipse attack on the blockchain network. In this section, we implement a random forest based regressor to estimate the onset of the eclipse attack  $\changePointExact$ under $H_1$~\eqref{eq:hypothesis}, i.e., presence of an eclipse attack on a blockchain network. Our training dataset comprised 81 data points, each associated with a sequence of $1000$ adjacency matrices denoted as $\adjacencyMatrixDataset$ for the BCNs (Sec. \ref{sec:simulation-setup}). The labels assigned to these data points corresponded to the onset of the eclipse attack. 
    To compare the accuracy in predicting the onset of the eclipse attack, we computed the root mean squared error for both the proposed eclipse attack detector (Algorithm~\ref{alg:change-detection}) and the eclipse attack detector based on the RFM. The results are summarized in Table~\ref{tab:comparison-start-point}.
    \begin{table}[ht]
        \centering
        \begin{tabular}{l|l}
            \hline
            \textbf{Detector} & \textbf{RMSE}\\
            \hline
             Proposed Detector (Algorithm~\ref{alg:change-detection}) &  1.55\\
             Random Forest Model & 38.63\\
             \hline
        \end{tabular}
        \caption{Comparison of root mean squared error (RMSE) in estimating the onset of the eclipse attack on a blockchain network. Our test dataset consisted of 83 data points, each corresponding to a sequence of 1000 adjacency matrices for the BCNs (Sec. \ref{sec:simulation-setup}). The RMSE values were averaged over 5 runs.}
        \label{tab:comparison-start-point}
    \end{table}
    
    The proposed eclipse attack detector outperforms the eclipse attack detector based on the RFM without requiring a training dataset.
    
    \subsection{Sensitivity of the RFM based Detector to Training Dataset}
    \label{sec:numerical-rfc-sensitivity}
    Recall that in Sec.\ref{sec:numerical-comparison-detector}, we implemented a RFM based detector to detect an eclipse attack on the blockchain network. In this section, we study the sensitivity of the eclipse attack detector, based on the RFM, to variations in the training dataset. To investigate this, we generated a training dataset and a test dataset consisting of 390 and 287 data points, respectively. The procedure for generating the dataset is outlined in Sec.\ref{sec:simulation-setup}. The primary distinction between the training and test datasets lies in the number of malicious users. Specifically, the training dataset was designed with 4 malicious users, while the test dataset was configured to include only 2 malicious users.

    Once we trained the random forest classifier, we validated its performance on the test dataset
    We observed a decrease in overall accuracy to 72.25\%. Consequently, achieving a precise random forest regressor requires careful feature extraction from the dataset, with an emphasis on selecting features that remain consistent with the parameters in the eclipse attack model (Definition~\ref{def:eclipse-attack}).

 To summarize, we used a simulated dataset to test the proposed eclipse attack detector (Algorithm~\ref{alg:change-detection}). 
 We also compared the proposed detector with an eclipse detector based on the RFM. 
 Our model stood out by concurrently addressing the two aspects of eclipse attack detection: 1) detecting the presence of an eclipse attack, and 2) estimating the onset of the eclipse attack. Moreover, the proposed eclipse attack detector did not require a training dataset.

	\section{Conclusion}
	This paper  addressed the problem of detecting an eclipse attack on a blockchain network by designing a non-parametric change detection algorithm. In an eclipse attack, malicious users isolate a victim user, disrupting
their ability to reach a consensus with the rest of the
network.
Our eclipse attack detection approach involved estimating changes in the Fréchet mean and variance of the BCN. We showed that the test statistic for the proposed eclipse attack detector weakly converges to a Brownian bridge process. This allowed us to quantify the false alarm rate of the detector. 
 The proposed statistical detector can be implemented as a smart contract on top of the blockchain to mitigate the impact of an eclipse attack. 
 Finally, we used ROC curves to characterize the performance of the proposed eclipse attack detector and the RFM based detector.
It is also worthwhile exploring detection of jump Markov dynamics and the resulting weak convergent statistic; see \cite{YIK09}

	In future work, we will explore: (1) detecting an eclipse attack on a blockchain network with time-varying blockchain users, (2) theoretical bounds on the accuracy of the test statistic when the BCNs are observed in noise, (3) refining the proposed test statistic to effectively detect an eclipse attack near endpoints, and (4) generalizing the change detection algorithm to address time-varying eclipse attack strategies. These extensions will improve the applicability and effectiveness of the proposed eclipse attack detection algorithm.
	\subsubsection*{Acknowledgments}
	This research was supported in part by the U.S. Army Research Office
	grant W911NF-21-1-0093, National Science Foundation grant CCF-2112457, and the Army Research Laboratory under Cooperative Agreement Number W911NF-23-2-0124. 

\bibliography{blockchain}

\begin{thebibliography}{10}
\providecommand{\url}[1]{#1}
\csname url@samestyle\endcsname
\providecommand{\newblock}{\relax}
\providecommand{\bibinfo}[2]{#2}
\providecommand{\BIBentrySTDinterwordspacing}{\spaceskip=0pt\relax}
\providecommand{\BIBentryALTinterwordstretchfactor}{4}
\providecommand{\BIBentryALTinterwordspacing}{\spaceskip=\fontdimen2\font plus
\BIBentryALTinterwordstretchfactor\fontdimen3\font minus
  \fontdimen4\font\relax}
\providecommand{\BIBforeignlanguage}[2]{{%
\expandafter\ifx\csname l@#1\endcsname\relax
\typeout{** WARNING: IEEEtran.bst: No hyphenation pattern has been}%
\typeout{** loaded for the language `#1'. Using the pattern for}%
\typeout{** the default language instead.}%
\else
\language=\csname l@#1\endcsname
\fi
#2}}
\providecommand{\BIBdecl}{\relax}
\BIBdecl

\bibitem{2017:MN-et-al}
M.~Nofer, P.~Gomber, O.~Hinz, and D.~Schiereck, ``Blockchain,'' \emph{Business
  \& Information Systems Engineering}, vol.~59, pp. 183--187, 2017.

\bibitem{2015:EH-et-al}
E.~Heilman, A.~Kendler, A.~Zohar, and S.~Goldberg, ``Eclipse attacks on
  bitcoin’s peer-to-peer network,'' in \emph{24th {USENIX} Security Symposium
  ({USENIX} Security 15)}, 2015, pp. 129--144.

\bibitem{2020:PD-HM}
\BIBentryALTinterwordspacing
P.~Dubey and H.~M{\"u}ller, ``{Fréchet change-point detection},'' \emph{The
  Annals of Statistics}, vol.~48, no.~6, pp. 3312 -- 3335, 2020. [Online].
  Available: \url{https://doi.org/10.1214/19-AOS1930}
\BIBentrySTDinterwordspacing

\bibitem{2008:JM}
J.~Matou{\v{s}}ek, ``On variants of the {Johnson-Lindenstrauss} lemma,''
  \emph{Random Structures \& Algorithms}, vol.~33, no.~2, pp. 142--156, 2008.

\bibitem{2023:DB-RB}
D.~Bhumichai and R.~Benton, ``Detection of {Ethereum} eclipse attack based on
  hybrid method and dynamic weighted entropy,'' in \emph{SoutheastCon 2023},
  2023, pp. 779--786.

\bibitem{2020:GX-et-al}
G.~Xu, B.~Guo, C.~Su, X.~Zheng, K.~Liang, D.~Wong, and H.~Wang, ``Am {I}
  eclipsed? a smart detector of eclipse attacks for ethereum,'' \emph{Computers
  \& Security}, vol.~88, p. 101604, 2020.

\bibitem{2022:QD-et-al}
Q.~Dai, B.~Zhang, and S.~Dong, ``Eclipse attack detection for blockchain
  network layer based on deep feature extraction,'' \emph{Wireless
  Communications and Mobile Computing}, vol. 2022, 2022.

\bibitem{2020:BA-et-al}
B.~Alangot, D.~Reijsbergen, S.~Venugopalan, and P.~Szalachowski,
  ``Decentralized lightweight detection of eclipse attacks on bitcoin
  clients,'' in \emph{2020 IEEE International Conference on Blockchain
  (Blockchain)}, 2020, pp. 337--342.

\bibitem{2021:HZ-et-al}
H.~Zheng, T.~Tran, and O.~Arden, ``Total eclipse of the enclave: Detecting
  eclipse attacks from inside tees,'' in \emph{2021 IEEE International
  Conference on Blockchain and Cryptocurrency (ICBC)}, 2021, pp. 1--5.

\bibitem{2022:AY-et-al}
A.~Yıldız, A.~Atmaca, A.~Solak, Y.~Tursun, and S.~Bahtiyar, ``A trust based
  dns system to prevent eclipse attack on blockchain networks,'' in \emph{2022
  15th International Conference on Security of Information and Networks (SIN)},
  2022, pp. 01--08.

\bibitem{2021:MI-RM}
M.~Iqbal and R.~Matulevi{\v{c}}ius, ``Exploring sybil and double-spending risks
  in blockchain systems,'' \emph{IEEE Access}, vol.~9, pp. 76\,153--76\,177,
  2021.

\bibitem{2022:RC-et-al}
R.~Chaganti, R.~Boppana, V.~Ravi, K.~Munir, M.~Almutairi, F.~Rustam, E.~Lee,
  and I.~Ashraf, ``A comprehensive review of denial of service attacks in
  blockchain ecosystem and open challenges,'' \emph{IEEE Access}, 2022.

\bibitem{1974-JG}
J.~Gill, ``Computational complexity of probabilistic turing machines,'' in
  \emph{Proceedings of the sixth annual ACM symposium on Theory of computing},
  1974, pp. 91--95.

\bibitem{2020:LH-et-al}
L.~Horv{\'a}th, C.~Miller, and G.~Rice, ``A new class of change point test
  statistics of {R}{\'e}nyi type,'' \emph{Journal of Business \& Economic
  Statistics}, vol.~38, no.~3, pp. 570--579, 2020.

\bibitem{1984:HK}
H.~Kushner, \emph{Approximation and weak convergence methods for random
  processes, with applications to stochastic systems theory}.\hskip 1em plus
  0.5em minus 0.4em\relax MIT press, 1984, vol.~6.

\bibitem{2009:SE-TK}
S.~Ethier and T.~Kurtz, \emph{Markov processes: characterization and
  convergence}.\hskip 1em plus 0.5em minus 0.4em\relax John Wiley \& Sons,
  2009.

\bibitem{2013:PB}
P.~Billingsley, \emph{Convergence of probability measures}.\hskip 1em plus
  0.5em minus 0.4em\relax John Wiley \& Sons, 2013.

\bibitem{2004:VB-et-al}
V.~Bewick, L.~Cheek, and J.~Ball, ``Statistics review 13: receiver operating
  characteristic curves,'' \emph{Critical care}, vol.~8, no.~6, pp. 1--5, 2004.

\bibitem{2019:JS-et-al}
J.~Speiser, M.~Miller, J.~Tooze, and E.~Ip, ``A comparison of random forest
  variable selection methods for classification prediction modeling,''
  \emph{Expert systems with applications}, vol. 134, pp. 93--101, 2019.

\bibitem{YIK09}
G.~Yin, C.~Ion, and V.~Krishnamurthy, ``How does a stochastic
  optimization/approximation algorithm adapt to a randomly evolving
  optimum/root with jump {M}arkov sample paths,'' \emph{Mathematical
  programming B. (Special Issue dedicated to B.T. Polyak's 70th Birthday)},
  vol. 120, no.~1, pp. 67--99, 2009.

\end{thebibliography}
\bibliographystyle{IEEEtran}

\appendices
    
	\section{Proof of Theorem~\ref{thm:null-hypothesis-results} in Sec.\ref{sec:analysis-algorithm}}
        \label{sec:proof-null-hypothesis-results}
        The outline of the proof is as follows. Observe that 
        \begin{align*}
            \numData\testStat_\numData(\numData t)&=\numData\testStat_\numData^A(\numData t)+\numData\testStat_\numData^B(\numData t)\\
            \numData\testStat_\numData^A(\numData t)&:=\frac{\numData t(1-t)}{\hat{\sigma}^2}\left(\hat{V}_{\numData t}-\hat{V}_{\numData(1-t)}\right)^2\\
		\numData\testStat_\numData^B(\numData t)&:=\frac{\numData t(1-t)}{\hat{\sigma}^2}\left(\hat{V}_{\numData t}^C-\hat{V}_{\numData t}+\right.\\&\left.\hat{V}_{\numData (1-t)}^C-\hat{V}_{\numData (1-t)}\right)^2
        \end{align*}
        \textit{Step 1:} Show that  $\numData\testStat_\numData^A(\numData t)\xrightarrow{w}\brownianBridge^2(t),t\in[\boundary,1-\boundary]$.\\
        To show this, first define $Z_\numData(t)=\sqrt{\numData\testStat_\numData^A(\numData t)}$. Then, for $t_0=\boundary\leq t_1\leq\ldots\leq t_k\leq 1=t_{k+1}$ show that $$(Z_\numData(t_1),Z_\numData(t_2),\ldots,Z_\numData(t_{k})\xrightarrow{w}\scrN(0,\Sigma)$$ where, \begin{align*}\Sigma_{t_1,t_2}&=\indicator(t_1=t_2)\\&+[t_1(1-t_2)/t_2(1-t_1)]^{\frac{1}{2}}\indicator(t_1\neq t_2),\;t_1\leq t_2\end{align*} Finally, show that $Z_\numData(t)$ is asymptotically equicontinuous in probability. Step 1 follows from Donsker's theorem.
        \\
        \textit{Step 2:} Show that  $\numData\testStat_\numData^B(\numData t)\xrightarrow{w}0$ by proving the consistency of estimators under $H_0$.\\
        Theorem~\ref{thm:null-hypothesis-results} follows from combining Step 1 and Step 2 using Slutsky's theorem. Refer~\cite{2020:PD-HM} for the detailed proof of Theorem~\ref{thm:null-hypothesis-results}. 
        \section{Proof of Theorem~\ref{thm:alternate-hypothesis} in Sec.\ref{sec:analysis-algorithm}.}
        \label{sec:proof-alternative-hypothesis}
        \begin{proof}
            The outline of the proof is as follows. Consider two cases: (1) $\changePointGuess\geq\changePointExact$ and (2) $\changePointGuess\leq\changePointExact$. The proof of case (2) is similar to case (1). For case (1), one can show that
        \begin{align*}
            \testStat(\numData t)&\leq \frac{t(1-t)}{\sigma^2}\left(\max(\alpha^2(V_1-V_2)^2,\right.\\
            &\left.(\alpha(V_1-V_2)+\min(\alpha \Delta_1,(1-\alpha)\Delta_2))^2)\right)\\
            &\leq \alpha(V_1-V_2)^2+\alpha(\Delta_1+\Delta_2)^2=\testStat(\numData\changePointExact)
        \end{align*}
        where, 
        \begin{align*}
        \alpha&=\frac{\changePointExact}{\changePointGuess}\\
        \Delta_1&=\mathbb{E}_{\distributionEclipseAttackAbsent}\left[\metric^2\left(\adjacencyMatrix_\graph, \mu_2\right)\right]-\mathbb{E}_{\distributionEclipseAttackAbsent}\left[d^2\left(\adjacencyMatrix_\graph, \mu_1\right)\right]\\
        \Delta_2&=\mathbb{E}_{\distributionEclipseAttackPresent}\left[d^2\left(\adjacencyMatrix_\graph, \mu_1\right)\right]-\mathbb{E}_{\distributionEclipseAttackPresent}\left[d^2\left(\adjacencyMatrix_\graph, \mu_2\right)\right]\\
        \sigma^2&=\changePointExact \mathbb{E}_{\distributionEclipseAttackAbsent}\left[\metric^4(\adjacencyMatrix_\graph, \tilde{\mu})\right]+(1-\changePointExact) \mathbb{E}_{\distributionEclipseAttackPresent}\left[\metric^4(\adjacencyMatrix_\graph, \tilde{\mu})\right]-\tilde{V}^2\\
        \tilde{\mu}&=\arg\min_{\omega \in \graphSpace} \left\{\changePointExact \mathbb{E}_{\distributionEclipseAttackAbsent}\left[\metric^2(\adjacencyMatrix_\graph, \adjacencyMatrix_\omega)\right]\right.\\
        &\left.+(1-\changePointExact) \mathbb{E}_{\distributionEclipseAttackPresent}\left[\metric^2(\adjacencyMatrix_\graph, \adjacencyMatrix_\omega)\right]\right\} \\ \tilde{V}&=\tau \mathbb{E}_{\distributionEclipseAttackAbsent}\left[\metric^2(\adjacencyMatrix_\graph, \tilde{\mu})\right]+(1-\tau) \mathbb{E}_{\distributionEclipseAttackPresent}\left[\metric^2(\adjacencyMatrix_\graph, \tilde{\mu})\right]\\
        \mu_i&=\argmin_{\omega\in\graphSpace}\mathbb{E}_{P_i}\left[\metric^2(\adjacencyMatrix_\graph,\adjacencyMatrix_\omega)\right],\;i=1,2\\
        V_i&=\min_{\omega\in\graphSpace}\mathbb{E}_{P_i}\left[\metric^2(\adjacencyMatrix_\graph,\adjacencyMatrix_\omega)\right],\;i=1,2
        \end{align*}
        The second inequality is obtained from the first inequality by considering multiple sub-cases.
        Refer~\cite{2020:PD-HM} for the detailed proof of Theorem~\ref{thm:alternate-hypothesis}. 
        \end{proof}
	\section{Proof of Theorem~\ref{thm:jl-lemma-test-statistic} in Sec.\ref{sec:analysis-algorithm}}
	\label{sec:proof-jl-lemma-test-statistic}
	\begin{proof}
		To prove Theorem~\ref{thm:jl-lemma-test-statistic}, we first derive an upper and a lower bound on the variance of the projected adjacency matrices $(\processedAdjacencyMatrix_{\graph_i})_i$~\eqref{def:processed-adjacency-matrix} in terms of the variance of the adjacency matrices of the BCNs $(\adjacencyMatrix_{\graph_i})_i$~\eqref{eq:adjacency-matrix}. Then, we compute the value of the test statistic $\tilde{\testStat}_\numData(\numData t)$ and $\testStat_\numData(\numData t)$ for $\numData\rightarrow\infty$ under $H_0$.
		
		\textit{Step1: Comparing the variances:}     
		Using triangle inequality, Lemma~\ref{lemma:jl} and the fact that $\argmax_{\lambda}\Expect{\|\adjacencyMatrix-\lambda\|^2}=\Expect{\adjacencyMatrix}$, we can compare the variance of the projected adjacency matrices $(\processedAdjacencyMatrix_{\graph_i})_i$~\eqref{def:processed-adjacency-matrix} and the variance of the adjacency matrices of the BCNs $(\adjacencyMatrix_{\graph_i})_i$~\eqref{eq:adjacency-matrix}. Let $\alpha$ be an adjacency matrix s.t. $f(\alpha)=\Expect{\processedAdjacencyMatrix_\graph}$. We obtain
		\begin{align*}
		(1-\epsilon)\left\| \adjacencyMatrix_{\graph_i}-\alpha\right\|^2
		&\leq\left\| \processedAdjacencyMatrix_{\graph_i}-\Expect{\processedAdjacencyMatrix_{\graph}}\right\|^2\\
		\Rightarrow
		(1-\epsilon)\sum_i\left\| \adjacencyMatrix_{\graph_i}-\alpha\right\|^2
		&\leq\sum_i\left\| \processedAdjacencyMatrix_{\graph_i}-\Expect{\processedAdjacencyMatrix_{\graph}}\right\|^2\\
		\Rightarrow
		(1-\epsilon)\sum_i\left\| \adjacencyMatrix_{\graph_i}-\Expect{\adjacencyMatrix_{\graph}}\right\|^2
		&\leq\sum_i\left\| \processedAdjacencyMatrix_{\graph_i}-\Expect{\processedAdjacencyMatrix_{\graph}}\right\|^2
		\end{align*}
		Let $\beta$ be such that the linear map obtained from the JL lemma yields $f(\Expect{\adjacencyMatrix_{\graph}})=\beta$.
		\begin{align*}
		\left\| \processedAdjacencyMatrix_{\graph_i}-\beta\right\|^2
		&\leq(1+\epsilon)\left\| \adjacencyMatrix_{\graph_i}-\Expect{\adjacencyMatrix_{\graph}}\right\|^2\\
		\Rightarrow
		\sum_i\left\| \processedAdjacencyMatrix_{\graph_i}-\beta\right\|^2
		&\leq(1+\epsilon)\sum_i\left\| \adjacencyMatrix_{\graph_i}-\Expect{\adjacencyMatrix_{\graph}}\right\|^2\\
		\Rightarrow
		\sum_i\left\| \processedAdjacencyMatrix_{\graph_i}-\Expect{\processedAdjacencyMatrix_{\graph}}\right\|^2
		&\leq(1+\epsilon)\sum_i\left\| \adjacencyMatrix_{\graph_i}-\Expect{\adjacencyMatrix_{\graph}}\right\|^2
		\end{align*}
		
		\textit{Step 2: Comparing the value of the test statistic: }
		Under $H_0$~\eqref{eq:hypothesis} as $n\rightarrow\infty$, $\hat{V}_{\changePointGuess}\rightarrow V,\,\hat{V}_{\changePointGuess}^C\rightarrow V,\, \hat{V}_{\numData-\changePointGuess}\rightarrow V,\,\hat{V}_{\numData-\changePointGuess}^C\rightarrow V$. Here, the convergence is in probability. This implies $\testStat_\numData(\numData t)=0$. 
        Using the previous inequalities, one obtains
        \begin{align*}
            \tilde{\testStat}_\numData(\numData t)\geq \frac{5\epsilon\, t(1-t)V}{\hat{\sigma}^2}     
        \end{align*}
	\end{proof}
\end{document}